\tikzset{configuration/.style = {state, rectangle, minimum height=0.5cm},
}
\let\emptyset\varnothing
\let\implies\Rightarrow
\newcommand{\B}{\mathbb{B}}
\newcommand{\N}{\mathbb{N}}
\newcommand{\dom}{\mathrm{dom}}
\newcommand{\concat}{\cdot}
\newcommand{\problemtitle}[1]{\gdef\@problemtitle{\hspace*{-3.5mm}\scalebox{.75}{$\blacktriangleright$} #1}}% Store problem title
\newcommand{\probleminput}[1]{\gdef\@probleminput{#1}}% Store problem input
\newcommand{\problemquestion}[1]{\gdef\@problemquestion{#1}}% Store problem question
  \par\addvspace{.5\baselineskip}
  \par\addvspace{.5\baselineskip}
\newtheorem{nclaim}{Claim}[section]
\newtheorem{example}{Example}[section]
\newtheorem{property}{Property}[section]
\newtheorem{corollary}{Corollary}[section]
\newtheorem{theorem}{Theorem}
\def\ech{.8}
\begin{document}
\title{%Characteristics of Synchronous Acyclic Modules
On the complexity of acyclic modules in automata networks}

\author[1]{K{\'e}vin Perrot}%\inst{1}
\author[2]{Pac{\^o}me Perrotin}%\inst{2}
\author[1]{Sylvain Sen{\'e}}
%\institute{
\affil[1]{Universit\'e publique}
%  \and
\affil[2]{Aix-Marseille Univ., Univ. de Toulon, CNRS, LIS, UMR 7020, Marseille, France}
%}

\date{}

\maketitle

\begin{abstract}
Modules were introduced as an extension of Boolean automata networks.  
They have inputs which are used in the computation said modules perform,
and can be used to wire modules
with each other.
In the present paper we extend this new formalism and study the specific case
of acyclic modules. 
These modules prove to be well described in their limit behavior by functions called 
output functions. 
We provide other results that offer an upper bound on the number of
attractors in an acyclic module when wired recursively into an automata network,
alongside a diversity of complexity results around the difficulty of deciding
the existence of cycles depending on the number of inputs and the size of said
cycle.
\end{abstract}

\section{Introduction}

Automata networks (ANs) are a generalisation of Cellular automata (CAs).
While classical CAs require a $n$-dimensional lattice with uniform local functions,
ANs can be built on any graph structure, and with any function at each vertex of the 
graph. They have been applied to the study of genetic regulation
networks~\cite{J-Kauffman1969,J-Thomas1973,J-Mendoza1998,J-Davidich2008,J-Demongeot2010} 
where the influence of different genes (inhibition, activation) are represented by
automata whose functions mirror together the global dynamics of the network.
This application in particular motivates the development of tools to understand,
predict and describe the dynamics of ANs in an efficient way. 
In the worst case, studying the dynamics of an AN (\emph{i.e.} analysing the behavior
of all possible configurations of the system) will always take an exponential
amount of time in the size of the network. 
Attempts using mainly combinatorics have been made to predict and count specific limit 
behavior of the system without enumerating the entire network's 
dynamics~\cite{J-Aracena2008,J-Demongeot2012,J-Aracena2017}. 
Other studies focused on understanding the dynamics of such complex systems by 
considering them as compositions of bricks simpler to 
analyse~\cite{J-Bernot2009,C-Siebert2009,C-Delaplace2012} and propose to study manners of 
controlling these bricks and/or systems~\cite{Biane2019,Pardo2019}. 
In line with such approaches and~\cite{T-Feder1990} the authors developed 
in~\cite{C-Perrot2018} the formalism of modules.
They are ANs with inputs, and operators called wirings that allow modules to be 
composed into larger modules, and eventually into ANs. 
In this paper we propose an exploration of a specific type of modules, namely acyclic 
modules, which do not include cycles in their interaction graph.
% providing definitions and results supporting that this formalism 
%allows for a an alternative understanding of the dynamics of ANs which does not
%require its explicit computation.\smallskip
The present paper also introduces output functions, which characterise the
behavior of an acyclic module as a function of the inputs of the network over
time. Output functions allow us to characterise the dynamics of a network
while forgetting its inner structure, illustrated by
Theorem~\ref{th-limit-from-output}, which shows that if two acyclic modules have
equivalent output functions, they also have isomorphic attractors.\smallskip

In Section~\ref{section-def} we propose definitions of ANs, modules and
wirings. 
Section~\ref{section-acyclic} presents definitions of acyclicity in modules and 
related concepts and results. 
Finally in Section~\ref{section-complexity} we explore complexity results
around acyclic modules and their inputs.
%The demonstrations of all 
%results are available in the appendix of this paper.

\paragraph{General notations.} We denote $\B$ the set of Booleans $\B = \{0, 
1\}$.
For $\Lambda$ an alphabet, we denote $\Lambda^n$ the set of vectors of size
$n$ with values in $\Lambda$. 
For $x \in \Lambda^n$, we might denote $x$ by $x_1 x_2 \ldots x_n$. 
For example, a vector $x \in \B^3$ defined such that $x_1 = 1$, $x_2 = 0$, $x_3 = 1$ 
can alternatively be denoted by $x = 101$.
For $S$ an ordered set of labels, $x \in \Lambda^S$,
$s$ in $S$, and $f$ a function 
which takes $x$ as an input, we might denote $f(x) = s$ as a 
simplification of $f(x) = x_s$.
For $G$ a digraph, we denote by $V(G)$ the set of its vertices and
by $A(G)$ the set of its arcs.
Let $G, G'$ be two digraphs, we denote $G \subseteq G'$ if and only if $G$ is an induced
subdigraph of $G'$, that is $V(G) \subseteq V(G')$ and $u, v \in V(G)$ implies
$(u, v) \in A(G) \Leftrightarrow (u, v) \in A(G')$.
For $f: A \to B$, and $C \subseteq A$, we denote $f|_C$ the function defined
over $f|_C: C \to B$ such that $f|_C(x) = f(x)$ for all $x \in C$.
%An alphabet is any set of labels, generally denoted $\Lambda$.
For $x \in \Lambda^S$, %a vector over the set $S$,
for any function $f : R \to S$ (for some set $R$),
we define $x \circ f$ as %the vector such that
$(x \circ f)_r = x_{f(r)}$, for
all $r \in R$. For $X = (x_1, x_2, \ldots, x_k)$ a sequence of $x_i \in \Lambda^S$,
%vectors,
we
define $X \circ f$ as the sequence $(x_1 \circ f, x_2 \circ f, \ldots, x_k \circ f)$.
In most of our examples, the alphabet $\Lambda$ will be $\B$ and the set $S$ finite, hence $x \in \Lambda^S$ will be considered as a Boolean vector (according to some order on $S$).

\section{Definitions}
\label{section-def}

\subsection{Automata networks}

ANs are composed by a set $S$ of automata. %, generally denoted $S$.
Each automaton in $S$, or node, is at any time in a state in $\Lambda$.
Gathering those isolated states into a vector of dimension $|S|$ provides us 
with a configuration of the network. 
More formally, a \emph{configuration} of $S$ over $\Lambda$ is a vector in 
$\Lambda^S$.
%\begin{definition}[Configuration]
%Let $S$ be a set of labels, and $\Lambda$ an alphabet. A \emph{configuration} of
%$S$ over $\Lambda$ is a vector in $\Lambda^S$.
%\end{definition}
The state of every automaton is bound to evolve as a function of the
configuration of the entire network. Each node has a unique function, called
a local function that is predefined and does not change over time. 
A \emph{local function} is thus a function $f$ defined over $f: \Lambda^S \to 
\Lambda$.
%\begin{definition}[Local function]
%For $S$ a set of labels, and $\Lambda$ an alphabet, a \emph{local function} is
%a function $f$ defined over $f: \Lambda^S \to \Lambda$.
%\end{definition}
An AN is described as a set which provides a local function to every node in the network. 
Formally, an \emph{automata network} $F$ is a set of local functions $f_s$ over 
$S$ and $\Lambda$ for every $s \in S$.
%\begin{definition}[Automata Network]
%Let $S$ be a set of labels, and $\Lambda$ an alphabet. An \emph{Automata Network}
%$F$ is a set of local functions $f_s$ over $S$ and $\Lambda$ for every $s \in S$.
%\end{definition}

\begin{example}
\label{example-F}
For $\Lambda = \B$, and $S = \{a, b, c\}$, let $F$ be a Boolean AN with local functions 
$f_a(x) = \neg a$, $f_b(x) = a \vee \neg c$, and $f_c(x) = \neg c \wedge \neg a$.
\end{example}

The configuration of an AN is updated using the local functions. 
The protocol by which the local functions are applied is called its update 
schedule.
Many different update schedules exist (actually, there are an infinite number of 
these), and it is well known %it has been shown 
that changing the update schedule of ANs can change the obtained 
dynamics~\cite{L-Robert1986,J-Goles2008,J-Aracena2013,J-Noual2018}.
The update schedule used in this paper is the parallel update schedule, in which every 
node udpates its value according to its local function at each time step. 
%in a synchronous way.
Thus, considering a configuration $x$ of an AN $F$, the \emph{update} $F(x)$ of 
$F$ over $x$ is the configuration such that for all $s \in S$, $F(x)_s = f_s(x)$, where 
$f_s$ is the local function assigned to $s$ in $F$.
%\begin{definition}[Automata Network Update]
%Let $x$ be a configuration of an Automata Network $F$. The update of $F$ over
%$x$, denoted $F(x)$, is defined as a configuration such that for all $s \in
%S$, $F(x)_s = f_s(x)$, where $f_s$ is the local function assigned to $s$ in $F$.
%\end{definition}

\begin{example}
Following the previous example, we can see that $F(000) = 111$, $F(010) = 111$
and that $F(111) = 010$.
\end{example}

ANs are usually represented by the influence that automata hold on each other.
As such the visual representation of an AN is a directed graph, called an interaction digraph, whose nodes are the automata of the network,
and arcs are the influences that link the different automata. 
Formally, $s$ \emph{influences} $s'$ if and only if there exist two 
configurations $x,x'$ such that $\forall r \in S, r \neq s \iff
x_r = x'_r$, and $F(x)_{s'} \neq F(x')_{s'}$.
%\begin{definition}[Influence]
%Let $F$ be an Automata Network, and $s, s' \in S$. We define that $s$
%\emph{influences} $s'$ if and only if there exist two configurations $x,x'$
%such that $\forall r \in S, r \neq s \Leftrightarrow
%x_r = x'_r$, and $F(x)_{s'} \neq F(x')_{s'}$.
%\end{definition}
From this, we define the
\emph{interaction digraph} of $F$ as the directed graph with nodes $S$ such that
$(s, s')$ is an arc of the digraph if and only if $s$ influences $s'$.
%\begin{definition} [Interaction Digraph]
%For $F$ an Automata Network over a set $S$, we define the
%\emph{interaction digraph} of $F$ as the directed graph with nodes $S$ such that
%$(s, s')$ is an arc of the digraph if and only if $s$ influences $s'$.
%\end{definition}
%\input{figures/automata}
For instance the interaction digraph of the network developed in 
Example~\ref{example-F} is depicted in Figure~\ref{fig-aut-dyn}.

To encapsulate the entire behavior of the network, one needs to enumerate all
the possible configurations the network, namely the elements of $\Lambda^S$, 
and describe the global update function upon this set.
This is often done via another graphical representation, which is another
digraph, called the dynamics of the network. Intuitively, this graph defines an arc from 
$x$ to $x'$ if and only if the update of the network over the configuration $x$ results 
in the configuration $x'$. Formally, the \emph{dynamics} of
$F$ can be represented as the digraph $G$ with vertex set $\Lambda^{S}$, such that $(x, x')$
is an arc in $G$ if and only if $F(x) = x'$.
%\begin{definition} [Automata Network Dynamics]
%For $F$ an Automata Network over a set $S$, we define the \emph{dynamics} of
%$F$ as the directed graph $G$ with vertex set $\Lambda^{S}$, such that $(x, x')$
%is an arc in $G$ if and only if $F(x) = x'$.
%\end{definition}
\begin{figure}[t!]
\vspace{-0.25cm}
\begin{center}
\begin{minipage}{.26\textwidth}
\begin{center}
\begin{tikzpicture} [every node/.style={scale=\ech},node distance=0.5cm,-{Latex[length=1.5mm, width=1.5mm]}]
	\node[state] (A) {a};
	\node[state] (C) [below right =of A] {c};
	\node[state] (B) [below left =of A] {b};

	\path
			(A) edge (B)
			(A) edge (C)
			(C) edge (B);

	\draw (A) to [out=110, in=70, looseness=7] (A);
	\draw (C) to [out=340, in=20, looseness=7] (C);
\end{tikzpicture}
\end{center}
\end{minipage}\qquad
\begin{minipage}{.37\textwidth}
\begin{center}
\begin{tikzpicture} [every node/.style={scale=\ech},->,node distance=1.25cm,-{Latex[length=1.5mm, width=1.5mm]}]
	\node[configuration] (000) {000};
	\node[configuration] (001) [right =of 000] {001};
	\node[configuration] (100) [above =of 000] {100};
	\node[configuration] (010) [above right =15pt of 000] {010};
	\node[configuration] (101) [right =of 100] {101};
	\node[configuration] (011) [right =of 010] {011};
	\node[configuration] (110) [above =of 010] {110};
	\node[configuration] (111) [right =of 110] {111};

	\path[-{Latex[length=2mm, width=2mm]}]
			(000) edge [bend right=42] (111)
			(100) edge (010)
			(110) edge [bend right=20] (010)
			(111) edge [bend right=35] (010)
			(011) edge [bend right=5] (100)
			(010) edge [bend right=20] (111)
			(001) edge [bend left=40] (100)
			(101) edge (010);
\end{tikzpicture}
\end{center}
\end{minipage}
\end{center}
\caption{(Left) Interaction digraph and of (right) dynamics of the network of 
  Example~\ref{example-F}.}
\label{fig-aut-dyn}
\vspace{-0.25cm}
\end{figure}
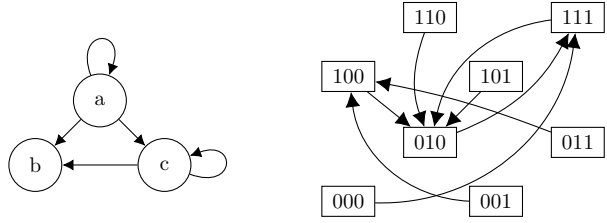
The dynamics of the network developed in Example~\ref{example-F} is presented
in Figure~\ref{fig-aut-dyn}.

The dynamics of a network is a large object. A commonly studied part of this
object is called the attractors of the networks. An attractor is a sequence
of configurations which constitutes a cycle in the dynamics of the network.
Alternatively, the attractors of a network can be defined as the set of
non trivial strongly connected components of its dynamics. 
Formally, an \emph{attractor} of $F$ is a connected component of the subdigraph $G_L \subseteq G$, such 
that $x$ is a node in $G_L$ if and only if there exists $k \in \N \setminus\{0\}$ such 
that $F^k(x) = x$. Notice that, classically in the domain of ANs, An attractor of size 
one is called a \emph{fixed point}, whereas an attractor of size
greater than one is called \emph{a limit cycle}.
%\begin{definition} [Attractors]
%Let $F$ be an Automata Network and $G$ its dynamics. We define the
%\emph{attractors} of $F$ as the subdigraph $G_L \subseteq G$,
%such that $x$ is a node in $G_L$ if
%and only if there exists $k \in \N \setminus\{0\}$ such that $F^k(x) = x$.
%\end{definition}

\begin{example}
In our example, the attractors of $F$ are the configurations $010$ and $111$
since they verify $F^2(010) = 010$ and $F^2(111) = 111$. For any other
configuration, updating the network more than two times changes the state
of the network to $010$ or $111$. Alternatively, the configuration $010$ and
$111$ form the only non trivial strongly connected component of the 
dynamics of this network.
\end{example}

%An attractor of size one is called a fixed point, and an attractor of size
%greater than one is a limit cycle.

\subsection{Modules}

Informally, modules can be described as ANs with inputs. 
More formally, for a given module, we introduce a new set of labels, usually
denoted $I$, which contains the inputs of the module. 
By convention, inputs will be denoted with Greek letters. 
A local function of a module does not only depend on the states of the automata of the network, but also on the evaluations of the inputs.
Inputs are not automata, and do not have a state; but it is interesting to
suggest that inputs are added nodes of the network that do not admit local 
functions. %, which do not define local functions.
Formally, by considering $S$ and $I$ as sets of labels, and $\Lambda$ as an alphabet, 
a \emph{module} is a set which, for every $s \in S$, defines a local function
$f_s : \Lambda^{S \cup I} \to \Lambda$.

%\begin{definition}[Modules]
%Let $S$ and $I$ be sets of labels, and $\Lambda$ an alphabet.
%A \emph{module} is a set which, for every $s \in S$, defines a local function
%$f_s : \Lambda^{S \cup I} \to \Lambda$.
%\end{definition}

\begin{example}
\label{example-M}
For $\Lambda = \B$, $S = \{a, b, c\}$ and $I = \{\alpha, \beta, \gamma\}$
let $M$ be a module with local functions
$f_a(x, i) = \neg b \vee \alpha$,
$f_b(x, i) = a \vee \neg c \vee \beta \vee \neg \alpha$, and
$f_c(x, i) = \neg c \wedge \neg \gamma$.
\end{example}

The digraph representation of a module is similar to that of an AN;
the inputs are added for clarity as incident arrows to the nodes they
influence. 
For instance, the module of Example~\ref{example-M} is illustrated in
Figure~\ref{fig-acyclic}.
As well, updating a module over the parallel update schedule is similar to updating an
AN. The inputs are introduced with specific notations which are detailed below. 
Let $x$ and $i$ be configurations over $S$ and $I$ respectively. The update of a 
module $M$ over $x$ and $i$, denoted $M(x, i)$, is defined as a configuration over $S$ 
such that, for all $s \in S$, $M(x,i)_s = f_s(x,i)$, where $f_s$ is the local function 
assigned to $s$ in $M$.
%\begin{definition}[Module Update]
%Let $x$ and $i$ be configurations over label sets $S$ and $I$ respectively.
%The update of a module $M$ over $x$ and $i$, denoted $M(x, i)$, is defined as
%a configuration over $S$ such that, for all $s \in S$, $M(x,i)_s = f_s(x,i)$,
%where $f_s$ is the local function assigned to $s$ in $M$.
%\end{definition}

\begin{example}
Let us update the module $M$ over the node configuration $x = 011$ and the input
configuration $i = 000$. We compute $f_a(x, i) = \neg 1 \vee 0 = 0$, 
$f_b(x, i) = 0 \vee \neg 1 \vee 0 \vee \neg 0 = 1$ and 
$f_c(x, i) = \neg 1 \wedge \neg 0 = 0$, thus giving $M(011,000) = 010$.
\end{example}

Since it will be convenient to update a module over multiple iterations
at once, we will generally consider a sequence of input configurations of
the form $(i_1, i_2, \ldots, i_m)$. For $\alpha, \beta, \ldots$ the inputs
of the considered module, we will denote for convenience $\alpha_1, \beta_1, \ldots$
the evaluation of those inputs in the configuration $i_1$, and so on, denoting
$\alpha_k, \beta_k, \ldots$ the evaluation of the respective inputs in the
configuration $i_k$. We will denote by $M(x, (i_1, i_2, \ldots, i_m))$ the
execution of $m$ updates of the module $M$ starting with configuration $x$,
taking the input configuration $i_k$ at update number $k$. 
Formally, it is defined recursively as:
\[M(x, (i_1, i_2, \ldots i_m)) = M(M(x, i_1), (i_2, \ldots, i_m))
\text{, with } M(x,\emptyset)=x.\]

\subsection{Wirings}

Modules are a formalism of composition and decomposition of ANs. As such, we
define the process of composing modules together as wiring. 
Wirings exist in two forms. 
One is recursive, and proposes the rearrangement of a single module by connecting inputs 
of the module to itself. 
The second type of wiring is non-recursive, and defines the combination
of two modules into one, connecting inputs of one module to the nodes of
the other. 
When an input is connected, any function depending on the value
of that input relies on the state of the connected node instead.
Those two sorts of wirings were proven to be universal to compose any
network from elementary parts~\cite{C-Perrot2018}. 
Wiring operations are defined upon an object that specifies the operated connections, 
usually denoted $\omega$ which is a partial function defined from a subset of 
inputs of the second module to nodes of the first.

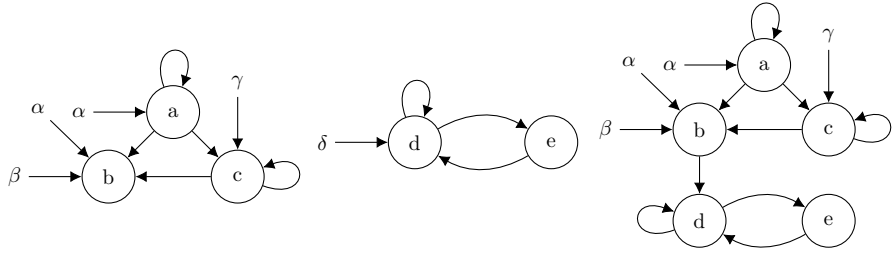
\begin{figure}[t!]
\vspace{-.25cm}
\begin{center}
\begin{minipage}{.31\textwidth}
\begin{tikzpicture} [every node/.style={scale=\ech},node distance=0.5cm,-{Latex[length=1.5mm, width=1.5mm]}]
	\node[state] (A) {a};
	\node[state] (C) [below right =of A] {c};
	\node[state] (B) [below left =of A] {b};

	\path	(A) edge (B)
			(A) edge (C)
			%(A) edge [loop above, -|] (A)
			(C) edge (B);
			%(C) edge [loop right] (C);

	\draw (A) to [out=110, in=70, looseness=7] (A);
	\draw (C) to [out=340, in=20, looseness=7] (C);

	\node [left =20pt of A] (alpha1) {$\alpha$};
	\draw (alpha1) -- (A);
	\node [above left =20pt of B] (alpha2) {$\alpha$};
	\draw (alpha2) -- (B);
	\node [left =20pt of B] (beta) {$\beta$};
	\draw (beta) -- (B);
	\node [above =20pt of C] (gamma) {$\gamma$};
	\draw (gamma) -- (C);
\end{tikzpicture}
\end{minipage}\quad
\begin{minipage}{.28\textwidth}
\begin{tikzpicture} [every node/.style={scale=\ech},node distance=0.5cm,-{Latex[length=1.5mm, width=1.5mm]}]
	\node[state] (D) {d};
	\node[state] (E) [right =1.1cm of D] {e};

	\path	(D) edge [bend left] (E)
			(E) edge [bend left] (D);

	\draw (D) to [out=110, in=70, looseness=7] (D);

	\node [left =20pt of D] (delta) {$\delta$};
	\draw (delta) -- (D);
\end{tikzpicture}
\end{minipage}\quad
\begin{minipage}{.31\textwidth}
\begin{tikzpicture} [every node/.style={scale=\ech},node distance=0.5cm,-{Latex[length=1.5mm, width=1.5mm]}]
	\node[state] (A) {a};
	\node[state] (C) [below right =of A] {c};
	\node[state] (B) [below left =of A] {b};

	\path	(A) edge (B)
			(A) edge (C)
			%(A) edge [loop above, -|] (A)
			(C) edge (B);
			%(C) edge [loop right] (C);

	\draw (A) to [out=110, in=70, looseness=7] (A);
	\draw (C) to [out=340, in=20, looseness=7] (C);

	\node [left =20pt of A] (alpha1) {$\alpha$};
	\draw (alpha1) -- (A);
	\node [above left =20pt of B] (alpha2) {$\alpha$};
	\draw (alpha2) -- (B);
	\node [left =20pt of B] (beta) {$\beta$};
	\draw (beta) -- (B);
	\node [above =20pt of C] (gamma) {$\gamma$};
	\draw (gamma) -- (C);

	\node[state] (D) [below =of B] {d};
	\node[state] (E) [below =of C] {e};

	\path	(D) edge [bend left] (E)
			(E) edge [bend left] (D);

	\draw (D) to [out=200, in=160, looseness=7] (D);

	\path (B) edge (D);
\end{tikzpicture}
\end{minipage}
\end{center}
\caption{Illustration of the wiring of Example~\ref{example-omega}. Interaction
digraphs of the modules (left) $M$, (center) $M'$ and (right) $M''$.}
\label{fig-wiring}
\vspace{-0.25cm}
\end{figure}

%\begin{definition}[Recursive wiring]
\smallskip
\noindent{\bf Recursive wiring.} \emph{Let $M$ be a module with label sets $S$ and $I$ which, for every $s \in S$, 
defines the local function $f_s$. For $\omega : I \nrightarrow S$
a partial function, we define $\circlearrowright_\omega M$ the module which,
%for every $s \in S$ defines the local function $f'_s$ as the result of following
%substitution:
%\[f'_s = f_s[\alpha \mapsto \omega(\alpha) \mid \forall \alpha \in dom(\omega)].\]
for every $s \in S$, defines the local function $f'_s$ such that:
\begin{equation*}
	\forall x \in \Lambda^{S \cup I \setminus \dom(\omega)},\ f'_s(x) = f_s( x \circ \hat \omega ),
	\text{ with } \hat\omega(k) = \begin{array}\{{ll}.
  		\omega(k) & \text{if } k \in \dom(\omega)\\
		k & \text{if } k \in S \cup I \setminus \dom(\omega)
	\end{array}\text{.}
\end{equation*}}
%\end{definition}

\vspace{-0.4cm}

\begin{example}
For $\Lambda = \B$, $S = \{a, b, c\}$ and $I = \{\alpha, \beta, \gamma\}$
let $M$ be a module with local functions
$f_a(x, i) = \neg b \vee \alpha$,
$f_b(x, i) = a \vee \neg c \vee \beta \vee \neg \alpha$, and
$f_c(x, i) = \neg c \wedge \neg \gamma$.
Let us define a partial function $\omega : I \to S$ such that
$\dom(\omega) = \{\alpha, \gamma\}$, and $\omega(\alpha) = c$ and
$\omega(\gamma) = a$. The result of the recursive wiring
$\circlearrowright_\omega M$ is a module with label sets $S' = S$ and 
$I' = \{\beta\}$ with local functions
$f'_a(x, i) = \neg b \vee c$,
$f'_b(x, i) = a \vee \neg c \vee \beta \vee \neg c$, and
$f'_c(x, i) = \neg c \wedge \neg a$.
\end{example}

%\begin{definition}[Non-recursive wiring]
\noindent{\bf Non-recursive wiring.} \emph{Let $M$ and $M'$ be two modules with respective label sets $S,I$, and
$S',I'$. We denote $f_s$ and $f'_{s'}$ the local functions defined respectively
in $M$ and $M'$ for every $s \in S$ and $s' \in S'$. For $\omega : I' \nrightarrow S$ a partial function, we define
$M \rightarrowtail_\omega M'$ the module with label sets $S \cup S'$ and $I \cup I' 
\setminus dom(\omega)$
%which for every $s \in S \cup S'$ defines the local function $f''_s$ as the result of
%the following substitution:
%\begin{equation*}
%f''_s = \begin{array}\{{ll}.
%  f_s & \text{ if } s \in S\\[.5em]
%  f'_s[\alpha \mapsto \omega(\alpha) \mid \forall \alpha \in dom(\omega)] & \text{ if } s \in S'
%\end{array}\text{.}
%\end{equation*}
which, for every $s \in S \cup S'$, defines the local function $f''_s$ such that:
\begin{multline*}
	\forall x \in \Lambda^\mathscr{S},\ f''_s(x) = \begin{array}\{{ll}.
  	f_s(x|_{S \cup I}) & \text{ if } s \in S\\%[.5em]
 	f'_s(x \circ \hat\omega) & \text{ if } s \in S'
	\end{array}\\
	\text{ with }
	\hat\omega(k) = \begin{array}\{{ll}.
	\omega(k) & \text{if } k \in \dom(\omega)\\
	k & \text{if } k \in \mathscr{S}
	\end{array}\text{,}\\
    \text{for } \mathscr{S} = S \cup S' \cup I \cup I' \setminus \dom(\omega) \text{.}
\end{multline*}}

\vspace{-0.6cm}
%\end{definition}

\begin{example}
\label{example-omega}
For $\Lambda = \B$, $S = \{a, b, c\}$ and $I = \{\alpha, \beta, \gamma\}$,
let $M$ be a module with local functions
$f_a(x, i) = \neg b \vee \alpha$,
$f_b(x, i) = a \vee \neg c \vee \beta \vee \neg \alpha$, and
$f_c(x, i) = \neg c \wedge \neg \gamma$.
Let also be $S' = \{d, e\}$, $I' = \{\delta\}$ and $M'$ another module
with local functions $f'_d(x, i) = \neg d \vee e \vee \delta$ and
$f'_e(x,i) = \neg e \vee d$.
Let $\omega : I' \to S$ be the function such that $\omega(\delta) = b$.
The result of the non-recursive wiring $M \rightarrowtail_\omega M'$ is the
module with sets $S'' = \{a, b, c, d, e\}$ and $I'' = \{\alpha, \beta, \gamma\}$
with local functions
$f''_a(x, i) = \neg b \vee \alpha$,
$f''_b(x, i) = a \vee \neg c \vee \beta \vee \neg \alpha$,
$f''_c(x, i) = \neg c \wedge \neg \gamma$,
$f''_d(x, i) = \neg d \vee e \vee b$ and
$f''_e(x,i) = \neg e \vee d$. (See an illustration in Figure~\ref{fig-wiring}.)
\end{example}

\section{Acyclicity}
\label{section-acyclic}

\subsection{Acyclic automata networks}

%Acyclicity is a property of graphs. 
%In the context of ANs, acyclicity 
Acyclicity is a property of the interaction digraph of the considered AN; it means that 
no node of the network influences itself, neither by a direct loop 
nor through the action of any cycle that would include this node. 
Acyclic ANs have been one of the first families of ANs to be studied and 
characterised~\cite{L-Robert1986}. Their dynamical behavior is trivial:
there is only one fixed point, which attracts every other configuration. This
is true under the parallel update schedule as well as any other schedule which
would eventually update every node a minimum amount of time for the stabilisation
to propagate. This early result led to the simple conclusion that cycles are essential 
to the complexity of their dynamics.

%\vspace{-0.2cm}

\subsection{Acyclic Modules}

%We now propose the study of the same property of interaction digraphs over the
%formalism of modules.

\noindent{\bf Acyclicity.} \emph{A module $M$ is \emph{acyclic} if its interaction digraph is acyclic.}
%\begin{definition}[Acyclic module]
%A module $M$ is \emph{acyclic} if its interaction digraph is acyclic.
%\end{definition}

\begin{example}
\label{example-acyM}
For $\Lambda = \B$, $S = \{a, b, c\}$ and $I = \{\alpha, \beta, \gamma\}$
let $M$ be a module with local functions
$f_a(x, i) = \alpha$,
$f_b(x, i) = a \vee \beta \vee \neg \alpha$, and
$f_c(x, i) = \neg b \wedge a \wedge \neg \gamma$.
$M$ is acyclic. (See an illustration in Figure~\ref{fig-acyclic}.)
\end{example}

The dynamics of this family of objects is simple enough to be studied, and 
complex enough to provide insights into the general dynamics of ANs.
It is indeed clear that every AN can be decomposed into a
recursively wired acyclic module. This can be done by taking a feedback arc
set of the interaction digraph of the network, and producing a module that
replaces every arc in the set by an input.

As an acyclic module has no loop or cycle in its influences, it can support no
long lasting memory used for computation. As such the behavior of any node in
the network can be understood as a function of only the evaluation of the inputs
in its last iterations.
This function is called an output function and how much it must look in the past
to make its prediction is called the delay of the function.
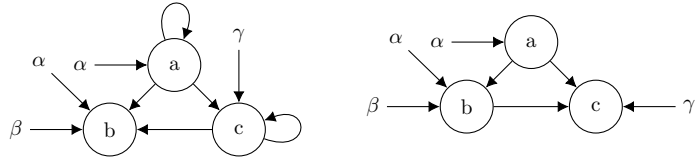
\begin{figure}[t!]
\vspace{-.25cm}
\begin{center}
\begin{minipage}{.4\textwidth}
\begin{center}
\begin{tikzpicture} [every node/.style={scale=\ech},node distance=0.5cm,-{Latex[length=1.5mm, width=1.5mm]}]
	\node[state] (A) {a};
	\node[state] (C) [below right =of A] {c};
	\node[state] (B) [below left =of A] {b};

	\path	(A) edge (B)
			(A) edge (C)
			%(A) edge [loop above, -|] (A)
			(C) edge (B);
			%(C) edge [loop right] (C);

	\draw (A) to [out=110, in=70, looseness=7] (A);
	\draw (C) to [out=340, in=20, looseness=7] (C);

	\node [left =20pt of A] (alpha1) {$\alpha$};
	\draw (alpha1) -- (A);
	\node [above left =20pt of B] (alpha2) {$\alpha$};
	\draw (alpha2) -- (B);
	\node [left =20pt of B] (beta) {$\beta$};
	\draw (beta) -- (B);
	\node [above =20pt of C] (gamma) {$\gamma$};
	\draw (gamma) -- (C);

\end{tikzpicture}
\end{center}
\end{minipage}
\begin{minipage}{.4\textwidth}
\begin{center}
\begin{tikzpicture} [every node/.style={scale=\ech},node distance=0.5cm,-{Latex[length=1.5mm, width=1.5mm]}]
	\node[state] (A) {a};
	\node[state] (C) [below right =of A] {c};
	\node[state] (B) [below left =of A] {b};

	\path	(A) edge (B)
			(A) edge (C)
			(B) edge (C);

	\node [left =20pt of A] (alpha1) {$\alpha$};
	\draw (alpha1) -- (A);
	\node [above left =20pt of B] (alpha2) {$\alpha$};
	\draw (alpha2) -- (B);
	\node [left =20pt of B] (beta) {$\beta$};
	\draw (beta) -- (B);
	\node [right =20pt of C] (gamma) {$\gamma$};
	\draw (gamma) -- (C);

\end{tikzpicture}
\end{center}
\end{minipage}
\end{center}
\caption{Interaction digraph of (left) the module of Example~\ref{example-M}, (right) 
the acyclic module of Example~\ref{example-acyM}.}
\label{fig-acyclic}
\vspace{-0.25cm}
\end{figure}

%\smallskip
%\noindent{\bf Output function.} \emph{For $k \in \N$, An \emph{output function} $O$ of delay $m$ is a function defined over
%$\Lambda^{m \times k} \to \Lambda$.}\smallskip
%\begin{definition}[Output function] 
%For $k \in \N$, An \emph{output function} $O$ of delay $m$ is a function defined over
%$\Lambda^{m \times k} \to \Lambda$.
%\end{definition}

For $M$ a module with $k$ inputs, an output function $O$ with delay $m$ is a
function defined over a sequence of inputs $(i_1, i_2, \ldots, i_m)$. 
%An output function is said to have minimal delay if and only if parameter $i_m$
%has an
%influence on the computation of $O$. For any $O$ with non minimal delay,
%we can provide a minimal equivalent by taking the function $O'$ with largest
%delay $m'< m$ such that
%parameter $i_{m'}$ has an influence on the computation of $O'$, and
%$O|_{\Lambda^{m' \times k}} = O'$. For the rest of this paper, only output
%function with minimal delay will be used.
Each node of a network defines its own output function, similarly to how it
defines a local function. The output function of a node always has minimal delay
and will depend on the
output functions defined by the nodes which influence it. In other terms, if
node $a$ influences node $b$, then whatever output function which predicts the
value of $a$ based only on inputs will be useful to predict the evaluation of $b$
one iteration later. As such $b$ does not directly depend on the output function
of $a$, but on the output function of $a$ with incremented delay.

%\begin{definition}[Output function incrementation]
%Let $O$ be an output function of delay $m$. The \emph{incrementation} of $O$
%is the output function of delay $m+1$ denoted $O^{+1}$ which verifies:
%\[O^{+1}(i_1, i_2, \ldots, i_{m+1}) = O(i_2, i_3, \ldots, i_{m+1})\]
%for any sequence of input configurations $\{i_1, i_2, \ldots, i_{m+1}\}$.
%\end{definition}

Output functions are sufficient to describe the behavior of the entire module
from the inputs after a given amount of time. This fact is illustrated by
the Property~\ref{property-outputfunctions} below.

\noindent{\bf Node output.} \emph{Let $M$ be an acyclic module. For every $s \in S$, we define the output function
of $s$, denoted $O_s$, as the output function with minimal delay $m$ such that
for any sequence of inputs $J = (i_1, i_2, \ldots, i_m)$ and any
configuration $x$, $M(x, J)_s = O_s(J)$.}

\begin{example}
For $\Lambda = \B$, $S = \{a, b, c\}$ and $I = \{\alpha, \beta, \gamma\}$
let $M$ be a module with local functions
$f_a(x, i) = \alpha$,
$f_b(x, i) = a \vee \beta \vee \neg \alpha$, and
$f_c(x, i) = \neg b \wedge a \wedge \neg \gamma$.
The module $M$ verifies the following output functions :
$O_a = \alpha_1$, which has delay $1$,
$O_b = \alpha_2 \vee \beta_1 \vee \neg \alpha_1$, which has delay $2$, and
$O_c = \neg \alpha_3 \wedge \neg \beta_2 \wedge \alpha_2 \wedge \alpha_2 \wedge
\neg \gamma_1$, which has delay $3$.
\end{example}

%\begin{definition}[Node output]
%Let $M$ be an acyclic module. For every $s \in S$, we define the output function
%of $s$, denoted $O_s$, as the output function with minimal delay $m$ such that
%for any sequence of inputs $J = \{i_1, i_2, \ldots, i_m\}$ and any
%configuration $x$, $M(x, J)_s = O_s(J)$.
%\end{definition}

%\begin{property}
%\label{property-outputfunctions}
%Let $M$ be an acyclic module with $k$ inputs. For every $s \in S$, there exists
%an output function $O_s$ with delay $m$ which
%for any sequence of inputs $J = \{i_1, i_2, \ldots, i_m\}$ and any
%initial configuration $x$ verifies $M(x, J)_s = O_s(J)$.
%\end{property}
%\begin{proof}
%We can construct the output function of every node using the following
%algorithm : Initially, set $A = \emptyset$ and $B = S$. Then start a loop
%in which you put in $A$ every node in $B$ which function only depend on inputs
%and on the state of nodes in $A$. Create the output function of the nodes you
%move to $A$ by replacing the evaluation of any node in $A$ by the incrementation
%of the output function of that node.
%\end{proof}

\begin{property}
\label{property-outputfunctions}
Let $M$ be an acyclic module. For every $s \in S$, $s$ has one and only one
output function $O_s$.
\end{property}

\begin{proof}
  We first claim that there always exist an output function for any node:
  \begin{nclaim}
    \label{property-outputfunctions-fact1}
	Let $M$ be an acyclic module with $k$ inputs. For every $s \in S$, there exists
	an output function $O_s$ with delay $m$ which
	for any sequence of inputs $J = \{i_1, i_2, \ldots, i_m\}$ and any
	initial configuration $x$ verifies $M(x, J)_s = O_s(J)$.
  \end{nclaim}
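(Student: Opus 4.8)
The plan is to prove existence by strong induction along the interaction digraph of $M$. Since $M$ is acyclic, the subdigraph induced on $S$ is a DAG, so I may induct on the \emph{depth} of a node $s$, meaning the length of the longest directed path of $S$-nodes ending at $s$. The one structural fact I use throughout is that, by the definition of influence, the local function $f_s$ depends only on the input coordinates and on the coordinates of the in-neighbours of $s$ in $S$; acyclicity is exactly what guarantees that $s$ is never among its own in-neighbours, so this dependency always refers to strictly shallower nodes.

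For the base case, take $s$ of depth $0$, that is a node with no in-neighbour in $S$. Then $f_s$ ignores every $S$-coordinate, so $M(x, i_1)_s = f_s(x, i_1)$ returns the same value for all configurations $x$; setting $O_s(i_1)$ to be this common value gives an output function of delay $1$.

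For the inductive step, let $s$ have in-neighbours $s_1, \ldots, s_p$ in $S$, each of strictly smaller depth, and assume by induction that each $s_j$ admits an output function $O_{s_j}$ of some delay $m_j$. Put $m = 1 + \max_j m_j$. Unfolding the recursive definition of the update gives $M(x, (i_1, \ldots, i_m))_s = f_s\bigl(M(x, (i_1, \ldots, i_{m-1})), i_m\bigr)$, and since $f_s$ reads only the $s_j$-coordinates among $S$, it suffices to control $M(x, (i_1, \ldots, i_{m-1}))_{s_j}$ for each $j$. Because $m - 1 \geq m_j$, I split the run as $M(x, (i_1, \ldots, i_{m-1})) = M\bigl(M(x, (i_1, \ldots, i_{m-1-m_j})), (i_{m-m_j}, \ldots, i_{m-1})\bigr)$; applying the defining property of $O_{s_j}$ to the intermediate configuration and to the length-$m_j$ window $(i_{m-m_j}, \ldots, i_{m-1})$ shows that this coordinate equals $O_{s_j}(i_{m-m_j}, \ldots, i_{m-1})$, independently of $x$. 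Feeding these $x$-independent values into $f_s$ together with $i_m$ yields a value that depends only on $(i_1, \ldots, i_m)$, which I take as the definition of $O_s$; it has delay $m$.

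I expect the only delicate point to be the index bookkeeping in the inductive step: one must verify that after $m-1 \geq m_j$ updates the output-function property of each in-neighbour can legitimately be invoked on the correct window, and that the windows for neighbours with different delays $m_j$ all fit inside $(i_1, \ldots, i_m)$. Since the claim requests only the existence of \emph{some} working delay and not a minimal one, the uniform choice $m = 1 + \max_j m_j$ is comfortably sufficient, and no optimisation of the delay is needed at this stage.
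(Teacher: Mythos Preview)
Your proof is correct and follows essentially the same approach as the paper: both argue by structural induction along the acyclic interaction digraph (you phrase it as induction on depth, the paper as peeling off a source of the remaining DAG), and both set the delay of $O_s$ to one plus the maximum delay among the predecessors. Your treatment of the index bookkeeping is in fact more explicit than the paper's, which packages the same window-shifting via an ``incrementation'' operator on output functions.
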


  Let us define the incrementation of an output function.

  \smallskip
  \noindent{\bf Output function incrementation.} \emph{Let $O$ be an output function of 
  delay $m$. The \emph{incrementation} of $O$
  is the output function of delay $m+1$ denoted $O^{+1}$ such that $O^{+1}(i_1, i_2, 
  \ldots, i_{m+1}) = O(i_2, i_3, \ldots, i_{m+1})$ for any sequence of input configurations 
  $(i_1, i_2, \ldots, i_{m+1})$.}\smallskip

  To prove~\ref{property-outputfunctions-fact1}, see that
  $M$ is an acyclic module, therefore there exists a node $s \in S$ such that
  $s$ is not influenced by any node in $S$ (but possibly by some inputs).
  As a consequence, there exists an output function $O_{s}$ which simply equals
  $f_{s}$, and has a delay of $j_s=1$.  Now for the induction, consider the
  module $M$ without some set of nodes $S' \subset S$ such that for each node $s'
  \in S'$ we have already defined an output function $O_{s'}$ with delay
  $j_{s'}$.  Since it is still acyclic there exists a node $s \in S \setminus
  S'$ such that $s$ is not influenced by any node in $S \setminus S'$ (but
  possibly by some inputs and some nodes in $S'$).  As a consequence, there
  exists an input function $O_{s}$ which computes the local function $f_{s}$,
  replacing the evaluation of any node $s' \in S'$ by the incrementation of the
  output function $O_{s'}$, and has a delay of $j_s= 1+\max\{j_{s'} \mid s' \in
  S' \}$.

  We now make the following claim:
  \begin{nclaim}
    \label{property-outputfunctions-fact2}
	Let $M$ be an acyclic module. Let $s \in S$, and
	$O_s$ and $O'_s$ be two output functions with respective delays $m$ and $m'$
	such that for any two sequences of inputs $J, J'$ of size $m$ and $m'$
	respectively and any initial configuration $x$, $M(x, J)_s = O_s(J)$ and
	$M(x, J')_s = O'_s(J)$. If $m = m'$, then $O_s = O'_s$.
  \end{nclaim}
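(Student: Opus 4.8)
The plan is to derive the equality directly from the definition of an output function, with essentially no computation. Recall that an output function of $M$ with delay $m$ is a function whose domain is the set of length-$m$ sequences of input configurations, namely $(\Lambda^I)^m$. Since we assume $m = m'$, the functions $O_s$ and $O'_s$ have one and the same domain; hence proving $O_s = O'_s$ amounts to checking that $O_s(J) = O'_s(J)$ for every $J = (i_1, i_2, \ldots, i_m)$ in this common domain.

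First I would fix an arbitrary such sequence $J$, together with an arbitrary initial configuration $x \in \Lambda^S$ (any one will do). The defining property of $O_s$ --- that $M(x, J)_s = O_s(J)$ for every sequence of inputs and every initial configuration --- gives $O_s(J) = M(x, J)_s$, while the defining property of $O'_s$ gives $O'_s(J) = M(x, J)_s$. The two right-hand sides are the very same element $M(x, J)_s$ of $\Lambda$, so $O_s(J) = O'_s(J)$. As $J$ was arbitrary, $O_s$ and $O'_s$ agree on the whole of their common domain, that is, $O_s = O'_s$.

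There is no real obstacle here: acyclicity and Claim~\ref{property-outputfunctions-fact1} are invoked only to guarantee that output functions exist at all, and play no role in the uniqueness argument itself. The single point deserving a moment of care is the opening remark that equal delay forces equal domain --- this is what turns pointwise agreement into genuine equality of functions, and it is precisely the hypothesis $m = m'$ that supplies it. (Two output functions of different delays for the same node can perfectly well differ as set-theoretic objects while describing the same behavior, which is why the surrounding Property~\ref{property-outputfunctions} additionally pins down the delay through minimality.)
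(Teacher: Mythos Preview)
Your argument is correct and is essentially identical to the paper's own proof: both fix an arbitrary input sequence $J$ of length $m=m'$, evaluate $M(x,J)_s$ once, and read off $O_s(J)=M(x,J)_s=O'_s(J)$. Your write-up is in fact slightly more explicit than the paper's about why equal delay yields a common domain, which is exactly the point that makes the pointwise check suffice.
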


  To see this is true, suppose $m = m'$. This implies that $J$ and $J'$ are of
  the same size. For any $J$ such that $J = J'$, we verify
  $O_s(J) = M(x, J)_s = M(x, J')_s = O'_s(J') = O'_s(J)$. Therefore $O_s = O'_s$.

  We conclude by stating that any two different minimal output function for $s$
  would provide a contradiction with claim~\ref{property-outputfunctions-fact2}.
  
\end{proof}

Property~\ref{property-outputfunctions} can be further refined to propose the
following result, which
states that two networks have the same attractors if and only if the modules they
can be decomposed into have the same number of inputs and the same output
functions on the nodes on which those inputs are wired. As such, modules can
be considered as black boxes which are to be considered equivalent in their limit
behavior, as long as they share the same output functions, according to some
bijection between their inputs.

\begin{theorem}
\label{th-limit-from-output}
	Let $M$ and $M'$ be two acyclic modules, with $T$ and $T'$ subsets of their nodes
	such that $|T| = |T'|$. If there exists $g$ a bijection from $I$ to $I'$ and
	$h$ a bijection from $T$ to $T'$ such that for
	every $s \in T$, $O_s$ and $O'_{h(s)}$ have same delay, and for every input
	sequence $J$ with length the delay of $O_s$,
	\[O_s(J) = O'_{h(s)}(J \circ g^{-1})\]
	then for any function $\omega : I \to T$, the networks
	$\circlearrowright_{\omega} M$ and $\circlearrowright_{h \circ \omega \circ g^{-1}} M'$ have
        isomorphic attractors (up to the renaming of automata given by $h$).
\end{theorem}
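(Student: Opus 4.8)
The goal is to show that the two recursively wired networks $\circlearrowright_\omega M$ and $\circlearrowright_{h\circ\omega\circ g^{-1}} M'$ have isomorphic attractors, with the isomorphism given by $h$ extended by identity (or rather, by $h$ on $T$ and some natural extension to the remaining nodes). The plan is to exploit the fact that in an acyclic module the whole configuration after sufficiently many steps is determined by the output functions $O_s$ evaluated on the recent input history, and that the recursive wiring feeds back the states of nodes in $T$ (resp.\ $T'$) into the inputs $I$ (resp.\ $I'$). So an attractor is a periodic sequence of configurations; what matters for periodicity is only the periodic sequence of values taken by the \emph{wired} nodes $T$, since those values drive everything.

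\medskip

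\noindent\textbf{Step 1: reduce attractors to periodic input/output streams.} First I would show that a configuration $x$ is on an attractor of $\circlearrowright_\omega M$ of period $p$ if and only if there is a periodic sequence $(i_k)_{k\in\Z/p\Z}$ of input configurations over $I$ such that (a) the feedback consistency holds, i.e.\ for every $k$ and every $\alpha\in\dom(\omega)$, $(i_{k})_\alpha = O_{\omega(\alpha)}(i_{k-1}, i_{k-2}, \ldots)$ read off the appropriate window of the stream (inputs not in $\dom(\omega)$ are free, but in a pure recursive wiring with $\omega: I\to T$ total and $I\setminus\dom(\omega)=\emptyset$ there are none), and (b) $x$ is the configuration obtained by evaluating every $O_s$ on the stream. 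The key input here is Claim~\ref{property-outputfunctions-fact1} / Property~\ref{property-outputfunctions}: after $m = \max_s \delta(O_s)$ steps the configuration is a fixed function of the last $m$ input configurations, so on an attractor (where we can run backwards arbitrarily far) the configuration is literally $\big(O_s(i_k, i_{k-1},\ldots)\big)_{s\in S}$. This turns the combinatorial object ``attractor of the wired network'' into the purely symbolic object ``periodic stream consistent with the output functions on the wired coordinates.''

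\medskip

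\noindent\textbf{Step 2: transport periodic streams via $g$.} Given a periodic stream $(i_k)$ witnessing an attractor of $\circlearrowright_\omega M$, define $i'_k = i_k \circ g^{-1} \in \Lambda^{I'}$. I claim $(i'_k)$ is a consistent stream for $\circlearrowright_{h\circ\omega\circ g^{-1}} M'$. Indeed, for $\alpha' \in I'$, write $\alpha' = g(\alpha)$; the new wiring sends $\alpha'$ to $h(\omega(\alpha))\in T'$, so consistency at $\alpha'$ requires $(i'_k)_{\alpha'} = O'_{h(\omega(\alpha))}(i'_{k-1}, i'_{k-2},\ldots)$. By hypothesis $O'_{h(s)}(J\circ g^{-1}) = O_s(J)$ for $s\in T$, so the right-hand side equals $O_{\omega(\alpha)}(i_{k-1}, i_{k-2},\ldots)$, which equals $(i_k)_\alpha = (i'_k)_{g(\alpha)} = (i'_k)_{\alpha'}$ by consistency of the original stream. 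Delays match by hypothesis, so the windows align. Hence consistent streams map bijectively (the inverse uses $g$, $h^{-1}$ and the symmetric hypothesis) between the two networks, and two streams are the same period-$p$ orbit on one side iff their images are on the other.

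\medskip

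\noindent\textbf{Step 3: match the configurations and check the isomorphism.} It remains to see that the configuration read off a stream on the $M$-side corresponds under $h$ (extended appropriately to $S\setminus T$) to the configuration read off the image stream on the $M'$-side. Here is the subtlety, and I expect it to be the main obstacle: the hypothesis only constrains $O_s$ versus $O'_{h(s)}$ for $s\in T$, not for nodes outside $T$ — indeed $|S|$ and $|S'|$ need not even be equal, and there may be no sensible bijection of all nodes. So the ``isomorphism of attractors up to renaming by $h$'' must really be read as: the dynamics restricted to the orbit, when we only look at coordinates in $T$ (resp.\ $T'$), are isomorphic via $h$, and the remaining coordinates are a deterministic function of the $T$-history on each side, hence carried along for free. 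I would make this precise by noting the attractor digraph of a network is just a disjoint union of directed cycles, and a cycle is determined up to iso by its length and... nothing else — so in fact once Steps 1–2 give a length-preserving bijection between the sets of periodic orbits, the attractors are abstractly isomorphic, and one checks that this bijection is realised concretely by $h$ on the $T$-coordinates. I would close by remarking that the ``if'' direction (and full bijectivity) follows by symmetry, applying the same argument with the roles of $M,M'$ swapped and $g,h$ replaced by $g^{-1},h^{-1}$, the hypothesis being symmetric in that substitution.
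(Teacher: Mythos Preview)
Your proposal is correct and follows essentially the same route as the paper: both reduce attractor configurations to self-consistent periodic input sequences via the output functions (your Step~1 is the paper's Claims~\ref{th-limit-fact1}--\ref{th-limit-fact2}), transport those sequences through $g$ using the hypothesis on $O_s$ versus $O'_{h(s)}$ (your Step~2 is the paper's Claim~\ref{th-limit-fact3}), and then check that this yields a dynamics-commuting bijection on attractor configurations. Your observation in Step~3 that $|S|$ and $|S'|$ may differ, so the isomorphism is really of the attractor digraphs (unions of cycles) rather than a node-wise renaming of full configurations, is a point the paper leaves implicit but handles the same way by building $\hat h:V(G)\to V(G')$ directly from the transported input sequences.
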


\begin{proof}
  First remark that $\omega$ has domain $I$ hence it wires all inputs of $M$
  and therefore $\circlearrowright_{\omega} M$ is an automata network with
  a dynamics and attractors. Furthermore $g$ is a bijection from $I$ to $I'$
  hence the same applies to $\circlearrowright_{h \circ \omega \circ g^{-1}} M'$.
	Let us denote
	$F = \circlearrowright_{\omega} M$ and
	$F' = \circlearrowright_{h \circ \omega \circ g^{-1}} M'$
	for simplicity, with $G$ and $G'$ the dynamics restricted to their
	respective attractors. We want to show that $G$ and $G'$ are isomorphic.

	For $x \in \Lambda^S$, we define the input sequence of length $k$ generated
	by $x$, denoted $\hat{J}(x)^k$, as the sequence which verifies
        \[\hat{J}(x)^k_\ell = F^{\ell-1}(x)|_T \circ \omega \text{, for } 1 \leq \ell \leq k.\]
	Intuitively, the sequence $\hat{J}(x)^k$ records the evaluation of
	the network's outputs on $T$, over $k$ consecutive updates,
	starting with configuration $x$.

	\begin{nclaim}
    \label{th-limit-fact1}
	Let $k$ be such that $\forall s \in S$ with $d_s \leq k$, for $d_s$ the delay of the
	output function $O_s$ in $M$. For $J$ an input sequence of length $k$, the
	evaluation of $M(x, J)$ is always the same, regardless of the starting
	configuration $x \in \Lambda^S$.
	\end{nclaim}

	To see that this is true, apply Property~\ref{property-outputfunctions}
	and consider that $M(x, J)_s = O_s(J)$, for every $s \in S$.
	This computation is properly defined
	as per the definition of the length of $J$. It follows that the computation
	of $M(x, J)$ only depends on $J$. Based on this fact, we will denote
	$M(x, J) = M(J)$ in the rest of this demonstration when applicable,
	that is, when no output function of $M$ has a delay greater than $k$.

	\begin{nclaim}
    \label{th-limit-fact2}
	Let $J$ be an input sequence  of length $k$ such that the configuration
	$M(J)$ is defined.
	$\hat{J}( M(J) )^k = J \Rightarrow M(J) \in V(G)$.
	\end{nclaim}

	This Claim states that if the configuration $M(J)$, which is obtained by
	updating any configuration $x$ in $M$ with the input sequence $J$, generates
	the input sequence $J$, then $M(J)$ is a configuration which belongs to an
	attractor of $F$.

	Let us denote $x = M(J)$. By hypothesis, $\hat J(x)^k = J$. It follows that:
	\begin{align*}
          F^k(x) =& F( F^{k-1}(x) ) = M(F^{k-1}(x), F^{k-1}(x)|_T \circ \omega) \\
          =& M( M( \ldots M( x, F^0(x)|_T \circ \omega) \ldots
		, F^{k-2}(x)|_T \circ \omega), F^{k-1}(x)|_T \circ \omega)\\
          =& M( M( \ldots M( x, \hat J(x)^k_{1} ) \ldots
		, \hat J(x)^k_{k-1}), \hat J(x)^k_k) \\
          =& M(x, \hat J(x)^k) = M(x,J) = M(J) = x
	\end{align*}
	which implies that $F^k(x) = x$ and $x$ is part of an attractor which length
	divides $k$, hence the Claim holds.

	\begin{nclaim}
    \label{th-limit-fact3}
	Let $x \in V(G)$. There exists $x' \in V(G')$ such that
	$\hat J(x)^k \circ g^{-1} = \hat J(x')^k$, for every $k \in \N$.
	\end{nclaim}

	This Claim implies that, for any configuration $x$ in an attractor of $F$,
	there exists a configuration $x'$ in an attractor of $F'$ which generates
	an input sequence $\hat J(x')^k$ equivalent to the input sequence $\hat J(x)^k$
	up to the bijection $g$, and that holds for any length $k$.

	To prove it, consider $x \in V(G)$ and let us take $k$ 
	greater than the the delay of any output function in $M$ and $M'$;
	and such that $F^k(x) = x$.
	We consider the input sequences $\hat J(x)^k$ and $\hat J(x)^k \circ g^{-1}$.
	Claim~\ref{th-limit-fact1} implies that $M'( \hat J(x)^k \circ g^{-1})$ is
	a well defined configuration over $M'$, which we shall denote $x'$.
	Let us prove that $\hat J(x)^k \circ g^{-1} = \hat J(x')^k$.
	By definition we know that
	\[\hat J(x)^k_1 \circ g^{-1} = F^0(x)|_T \circ \omega \circ g^{-1} = x|_T \circ \omega \circ g^{-1}\]
	while
	\begin{align*}
	  \hat J(x')^k_1 = F'^0(x')|_{T'} \circ h \circ \omega \circ g^{-1}
          =& x'|_{T'} \circ h \circ \omega \circ g^{-1}\\
	  =& M'( \hat J(x)^k \circ g^{-1})|_{T'} \circ h \circ \omega \circ g^{-1}.
	\end{align*}
	Let us note that for any $s' \in T'$,
	$M'( \hat J(x)^k \circ g^{-1})_{s'} = O'_{s'}( \hat J(x)^k \circ g^{-1} )$
	which equals $O_{h^{-1}(s')} ( \hat J(x)^k )$ by the hypothesis of the Theorem.
	It follows that
        \[M'( \hat J(x)^k \circ g^{-1})|_{T'} \circ h = M( \hat J(x)^k )|_T \circ h^{-1} \circ h = x|_T\]
	and this implies that
	\[ \hat J(x')^k_1 = x|_T \circ \omega \circ g^{-1} = \hat J(x)^k_1 \circ g^{-1} \]
	therefore $\hat J(x)^k_1 \circ g^{-1} = \hat J(x')^k_1$.

	This marks the first step of the induction to prove $\hat J(x)^k \circ g^{-1} = \hat J(x')^k$.
	Let us state the induction hypothesis that
        \[\hat J(x)^k_{[1,\ell]} \circ g^{-1} = \hat J(x')^k_{[1,\ell]} \text{, for } \ell < k.\]
	We now prove that it implies
	$\hat J(x)^k_{[1,\ell+1]} \circ g^{-1} = \hat J(x')^k_{[1,\ell+1]}$. To prove it,
	we only need to prove $\hat J(x)^k_{\ell+1} \circ g^{-1} = \hat J(x')^k_{\ell+1}$.
	Let $\concat$ denote the concatenation of two sequences. We know that 
	\begin{align*}
          \hat J(x)^k_{\ell+1} \circ g^{-1} =& F^\ell(x)|_T \circ \omega \circ g^{-1}\\
          =& M(x, \hat J(x)^k_{[1,\ell]})|_T \circ \omega \circ g^{-1}\\
          =& M(M(\hat J(x)^k), \hat J(x)^k_{[1,\ell]})|_T \circ \omega \circ g^{-1}\\
          =& M( \hat J(x)^k \concat \hat J(x)^k_{[1,\ell]})|_T \circ \omega \circ g^{-1}
	\end{align*}
	and
	\begin{align*}
          \hat J(x')^k_{\ell+1} =& F'^\ell(x')|_{T'} \circ h \circ \omega \circ g^{-1}\\
          =& M'(x', \hat J(x')^k_{[1,\ell]})|_{T'} \circ h \circ \omega \circ g^{-1}\\
	  =& M'(x', \hat J(x)^k_{[1,\ell]} \circ g^{-1})|_{T'} \circ h \circ \omega \circ g^{-1}\\
	  =& M'(M'( \hat J(x)^k \circ g^{-1}), \hat J(x)^k_{[1,\ell]} \circ g^{-1})|_{T'} \circ h \circ \omega \circ g^{-1}\\
	  =& M'( (\hat J(x)^k \circ g^{-1}) \concat (\hat J(x)^k_{[1,\ell]} \circ g^{-1}))|_{T'} \circ h \circ \omega \circ g^{-1}\\
          =& M'( (\hat J(x)^k \concat \hat J(x)^k_{[1,\ell]}) \circ g^{-1})|_{T'} \circ h \circ \omega \circ g^{-1}.
	\end{align*}
        As the sequence $(\hat J(x)^k \concat \hat J(x)^k_{[1,\ell]}) \circ g^{-1}$ is at least
	of length $k$, we can use it to compute the result of output functions.
        From the hypothesis of the Theorem
        it follows that for every $s' \in T'$,
	\begin{align*}
          M'( (\hat J(x)^k \concat \hat J(x)^k_{[1,\ell]}) \circ g^{-1})_{s'}
          =& O'_{s'} ( (\hat J(x)^k \concat \hat J(x)^k_{[1,\ell]}) \circ g^{-1} )\\
	  =& O_{h^{-1}(s')} ( \hat J(x)^k \concat \hat J(x)^k_{[1,\ell]} )\\
          =& M( \hat J(x)^k \concat \hat J(x)^k_{[1,\ell]} )_{h^{-1}(s')}
	\end{align*}
	which, using again the hypothesis of the Theorem to relate $M$ and $M'$, implies
	\begin{align*}
          \hat J(x')^k_{\ell+1} =&
          M'( (\hat J(x)^k \concat \hat J(x)^k_{[1,\ell]}) \circ g^{-1})|_{T'} \circ h \circ \omega \circ g^{-1}\\
          =& M( \hat J(x)^k \concat \hat J(x)^k_{[1,\ell]} )_T \circ h^{-1} \circ h \circ \omega \circ g^{-1}\\
          =& M( \hat J(x)^k \concat \hat J(x)^k_{[1,\ell]} )_T \circ \omega \circ g^{-1}\\
          =& \hat J(x)^k_{\ell+1} \circ g^{-1}
	\end{align*}
	and concludes the induction, therefore
	$\hat J(x)^k \circ g^{-1} = \hat J(x')^k$. It follows that
	$\hat J(M'(\hat J(x)^k \circ g^{-1}))^k = \hat J(x')^k = \hat J(x)^k \circ g^{-1}$,
	which implies by Claim~\ref{th-limit-fact2} that $x' \in V(G')$, and that
	$x'$ is in an attractor which size divides $k$, just like $x$.
        This concludes
	our proof of Claim~\ref{th-limit-fact3} for $k$ big enough,
        but remark that as a consequence it holds for any $k \in \N$.

	Observe a symmetric sequence of arguments to prove that for every $x' \in V(G')$,
	there exists $x \in V(G)$ such that $\hat J(x')^k \circ g = \hat J(x)^k.$
	It follows that for any $x \in V(G)$, there exists a unique $x' \in V(G')$
	such that the above relation holds.
	This is true since if there existed $x', x'' \in V(G')$ such that
	$\hat J(x)^k \circ g^{-1} = \hat J(x')^k$, $\hat J(x)^k \circ g^{-1} = \hat J(x'')^k$,
	and $x' \neq x''$,
	then it would follow that
        $\hat J(x')^k = \hat J(x)^k \circ g^{-1} = \hat J(x'')^k \circ g \circ g^{-1} = \hat J(x'')^k$.
	Since $x', x'' \in V(G')$, for a large enough $k$ multiple
	of the sizes of the attractors containing $x'$ and $x''$,
        we would have
	$x' = M'(\hat J(x')^k) = M'(\hat J(x'')^k) = x''$, a contradiction.

	Let us therefore denote $\hat h : V(G) \to V(G')$ the bijection which to any
	$x \in V(G)$ associates $x' \in V(G')$ such that $\hat J(x)^k \circ g^{-1} = \hat J(x')^k$.
	This implies that $\hat h(x) = M'(\hat J(x)^k \circ g^{-1})$, for $k$ larger
	than the delay of any output function in $M$ and $M'$, and multiple of the
	size of the attractors which contain $x$ and $\hat h(x)$.
	Let us prove that $\hat h$ is an isomorphism from $G$ to $G'$. We need to
	prove that, for any $x \in V(G)$, $\hat h(F(x)) = F'(\hat h(x))$.

	Let $x \in V(G)$ and $k$ a multiple of the length of the attractor $x$ is
	part of, such that $k$ is greater than any delay of any output function
	in both $M$ and $M'$.
	It follows that
	\begin{align*}
          \hat h(F(x)) =&
	  M'( \hat J(F(x))^k \circ g^{-1})\\
          =& M'( ( F^0(F(x))|_T \circ \omega,
		  F^1(F(x))|_T \circ \omega, \ldots,
		  F^{k-1}(F(x))|_T \circ \omega ) \circ g^{-1})\\
          =& M'( ( F^1(x)|_T \circ \omega,
		  F^2(x)|_T \circ \omega, \ldots,
		  F^k(x)|_T \circ \omega ) \circ g^{-1})\\
          =& M'( ( F^1(x)|_T \circ \omega \circ g^{-1},
		  F^2(x)|_T \circ \omega \circ g^{-1}, \ldots,
		  F^k(x)|_T \circ \omega \circ g^{-1})).%\\
		%= M'( ( F'(x)^1|_{T'} \circ h \circ \omega \circ g^{-1},
		%		 F'(x)^2|_{T'} \circ h \circ \omega \circ g^{-1}, \ldots,
		%		 F'(x)^k|_{T'} \circ h \circ \omega \circ g^{-1}))\\
	\end{align*}
	Let us consider an individual element of the above sequence,
	$F^\ell(x)|_T \circ \omega \circ g^{-1}$. It follows that for every $s \in S$,
	\begin{align*}
          F^\ell(x)_s =& M(x, \hat J(x)^\ell)_s\\
          =& M(M(\hat J(x)^k), \hat J(x)^\ell)_s\\
	  =& M(\hat J(x)^k \concat \hat J(x)^\ell)_s\\
	  =& O_s(\hat J(x)^k \concat \hat J(x)^\ell)\\
          =& O'_{h(s)}( (\hat J(x)^k \concat \hat J(x)^\ell) \circ g^{-1})\\
          =& M'( (\hat J(x)^k \concat \hat J(x)^\ell) \circ g^{-1})_{h(s)}\\
	  =& M'( M'( \hat J(x)^k \circ g^{-1}), \hat J(x)^\ell \circ g^{-1})_{h(s)}\\
	  =& M'( \hat h(x), \hat J(x)^\ell \circ g^{-1})_{h(s)}\\
	  =& F'^\ell( \hat h(x) )_{h(s)}
	\end{align*}
	which implies that $F^\ell(x)|_T \circ \omega \circ g^{-1} =
	F'^\ell( \hat h(x) )|_{T'} \circ h \circ \omega \circ g^{-1}$. This, applied
	to the previous development, gives
	\begin{align*}
          \hat h(F(x)) =&
          M'( ( F'^1( \hat h(x) )|_{T'} \circ h \circ \omega \circ g^{-1},
		F'^2( \hat h(x) )|_{T'} \circ h \circ \omega \circ g^{-1},
                \ldots\\
                &\hspace*{5cm} \ldots,
                F'^k( \hat h(x) )|_{T'} \circ h \circ \omega \circ g^{-1}))\\
          =& M'( \hat J( F'( \hat h(x) ) ) )\\
          =& F'( \hat h(x) )
	\end{align*}
	and concludes the proof of the Theorem.
        
\end{proof}

Output functions are a characterisation of the behavior of acyclic modules which
is enough to understand their limit dynamics under parallel schedule.
This characterisation behaves in expected ways under non-recursive wirings.
Taking two acyclic modules and wiring them non-recursively makes a module whose
output functions are deducible from the output functions of the initial
acyclic module.
We now state a result which provides an upper bound on the
number of attractors of each size of an AN, which is wired from
a module with $k$ inputs.

\begin{theorem}
\label{th-bound}
Taking an acyclic module with $k$ inputs and wiring all inputs recursively gives
an AN. Let us denote $a(k, c)$ the number of attractors of size $c$ of its dynamics.
We state $a(k, c) \leq A(k, c)$, with:
  \[A(k, 1) = |\Lambda|^k
  \qquad \text{ and } \qquad
  A(k, c) = |\Lambda|^{kc} - \sum_{c' < c,\ c' | c} A(k, c').\]
\end{theorem}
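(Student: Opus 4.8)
The plan is to bound $a(k,c)$ by counting, for each size $c$, the number of possible attractors via the input sequences they generate. The key object is the map sending a configuration $x$ of the network $F=\circlearrowright_\omega M$ to the input sequence $\hat J(x)^c$ it generates over $c$ steps, exactly as in the proof of Theorem~\ref{th-limit-from-output}. First I would observe that, once $c$ is at least the maximal delay of the output functions of $M$ (and in general after padding / taking a large enough multiple, using the incrementation operation), a configuration $x$ lying on an attractor of size dividing $c$ is completely determined by $\hat J(x)^c$: indeed $x = F^c(x) = M(x,\hat J(x)^c) = M(\hat J(x)^c)$ by Claim~\ref{th-limit-fact1}, since the evaluation of $M$ on an input sequence of length $\geq$ all delays does not depend on the starting configuration. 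So an attractor of size $c$ is faithfully encoded by the cyclic sequence of input vectors read along it, and distinct attractors of size dividing $c$ give sequences that are not cyclic shifts of one another.

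Next I would count the relevant input sequences. An input sequence of length $c$ over $k$ inputs with alphabet $\Lambda$ is an element of $(\Lambda^k)^c$, so there are $|\Lambda|^{kc}$ of them. A configuration on an attractor of size exactly $c$ corresponds to such a sequence that is \emph{periodic with minimal period $c$} when read cyclically (a sequence with smaller minimal period $c' \mid c$ would witness $F^{c'}(x)=x$, putting $x$ on an attractor of size $c'$, not $c$). The number of length-$c$ cyclic sequences over an alphabet of size $|\Lambda|^k$ whose minimal period divides $c'$ is at most $|\Lambda|^{kc'}$ (it is exactly the number of length-$c'$ sequences, each extended periodically), so by inclusion the number of sequences of minimal period exactly $c$ is at most $|\Lambda|^{kc} - \sum_{c' < c,\, c' \mid c} (\text{count for minimal period } c')$. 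Pushing this bound down recursively and recalling that each attractor of size $c$ contributes exactly $c$ configurations — but also exactly $c$ cyclic shifts of the same underlying sequence, which cancel — gives precisely the stated recurrence $A(k,c) = |\Lambda|^{kc} - \sum_{c' < c,\, c' \mid c} A(k,c')$, with the base case $A(k,1) = |\Lambda|^k$ counting fixed points (each fixed point generates a constant input sequence, and there are $|\Lambda|^k$ possible constant input vectors). The induction hypothesis $a(k,c') \leq A(k,c')$ for $c' < c$ feeds directly into the subtracted term.

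The step I expect to be the main obstacle is making the encoding map genuinely injective on attractors of size exactly $c$ without an auxiliary largeness assumption on $c$: the clean identity $x = M(\hat J(x)^c)$ requires $c$ to dominate every output-function delay, whereas the statement quantifies over all $c$. I would handle this by noting that an attractor of size $c$ is also an attractor "of size $Nc$" for every multiple $N$, choosing $N$ so that $Nc$ exceeds all delays, applying the counting argument at level $Nc$, and then checking that the divisor-sum recurrence is compatible with this rescaling (equivalently, arguing directly that even for small $c$ the configurations on a size-$c$ attractor inject into length-$c$ cyclic input sequences of minimal period $c$, since $F^c(x)=x$ already forces the generated sequence to be $c$-periodic and any shorter period would collapse the attractor size). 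A secondary point to get right is the bookkeeping between "number of configurations on attractors of size $c$" (which is $c\cdot a(k,c)$) and "number of minimal-period-$c$ cyclic classes"; the factor $c$ appears on both sides and cancels, so the recurrence for $A$ is stated directly at the level of attractors, and I would state this cancellation explicitly to avoid an off-by-$c$ error.
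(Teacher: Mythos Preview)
Your overall strategy matches the paper's: encode each size-$c$ attractor by the length-$c$ sequence of input values it generates, show this encoding is injective on attractors, and then bound by the number of length-$c$ sequences over $\Lambda^k$ of minimal period exactly $c$, which is precisely $A(k,c)$ by the stated recurrence.

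The paper differs in how it establishes injectivity, and its argument is more elementary. Rather than invoking output functions and the identity $x = M(\hat J(x)^c)$ (which, as you correctly anticipate, requires $c$ to dominate all delays), the paper proves directly that two size-$c$ attractors $X, X'$ satisfying $X|_{\omega(I)} = X'|_{\omega(I)}$ must coincide, via a layered induction on the acyclic structure of $M$: the nodes influenced only by inputs are determined first from the projected sequence, then the next layer of nodes, and so on until all of $S$ is recovered. This works for every $c$ with no delay hypothesis, so the obstacle you flag and your workaround through multiples $Nc$ never arise. Your workaround is valid, just unnecessary once one sees this direct argument.

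Your discussion of a factor $c$ is also not needed for the stated bound. The paper maps each attractor (with a chosen base point) to a single sequence rather than to its $c$ cyclic shifts, and bounds $a(k,c)$ directly by the count $A(k,c)$ of aperiodic sequences. Carrying your cyclic-shift observation through would in fact yield the sharper inequality $c\cdot a(k,c) \leq A(k,c)$, which still implies the theorem; the paper simply does not exploit this.
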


\begin{proof}
	Let us consider an acyclic module $M$ with $k$ inputs. Consider a wiring $\omega$
	over $M$ such that $\dom(\omega) = I$, for $I$ the set of $k$ inputs of $M$.
	Finally consider the dynamics of the Automata Network
	$F = \circlearrowright_\omega M$. Let us denote $\omega(I)$ and call output set
	the set defined $\omega(I) = \{\omega(\alpha) \mid \alpha \in I\}$. We remark
	the following fact :

	\begin{equation}
		\label{boundproof-Isize}
		|\omega(I)| \leq |I| = k
	\end{equation}

	Let us consider an attractor $X = \{x_1, x_2, \ldots, x_c\}$ over $F$.
	By definition $F(x_i) = x_{i+1}$ for $i < c$ and $F(x_c) = x_1$.
	For $R \subseteq S$, and $x$ a vector over $S$ with values in $\Lambda$, we
	define $x|_R$ the projection of this vector over $R$. By extension, $X|_R$
	denotes the projection of the attractor $X$.
	Provided another such attractor $X'$ of same size, we make the following claim.
	\begin{nclaim}
		\label{boundproof-hypo}
		$X|_{\omega(I)} = X'|_{\omega(I)} \implies X = X'$.
	\end{nclaim}
	To see this is true, let us assume that $X|_{\omega(I)} = X'|_{\omega(I)}$.
	Since $M$ is acyclic by definition, we know that there exists a non empty set
	of nodes $S_1 \subseteq S$ such that every $s \in S_1$ is only influenced by
	inputs and not by any other node. This means that assuming
	$X|_{\omega(I)} = X'|_{\omega(I)}$
	implies $X|_{\omega(I) \cup S_1} = X'|_{\omega(I) \cup S_1}$. Now consider that,
	after the same acyclicity hypothesis, there exists a non-empty set
	$S_2 \subseteq S$ of nodes which are only influenced by inputs, and nodes in
	$S_1$, which implies $X|_{\omega(I) \cup S_1 \cup S_2} =
        X'|_{\omega(I) \cup S_1 \cup S_2}$. The claim follows
	by induction.

	This claim allows us to prove that there can only be as
	many attractors of size $c$ in $F$ as there is distinct $X|_{\omega(I)}$. This
	provides us with a weaker form of the result:
	\begin{equation}
		\label{boundproof-midstep}
		a(k, c) \leq |\Lambda|^{kc}
	\end{equation}
	Let $X$ be one of the $|\Lambda|^{kc}$ possible sequence of $c$ configurations.
	Let us assume that $F(x_i) = x_{i+1}$ for $i < c$ and $F(x_c) = x_1$.
	By definition, if there exists
	$i, j$ such that $i \neq j$ and $x_i = x_j$, the sequence $X$ will be periodic.
	This implies the existence of a smaller sequence $X'$ such that $X = X'^q$ for
	$q \in \N$. In other words, for every possible proper attractor $X'$ such that
	the size of $X'$ divides $c$, there exists a sequence $X = X'^{\frac{c}{|X'|}}$
	which is not an attractor of $F$ by definition. Using this fact
	and \ref{boundproof-midstep}, we conclude that $a(c,k)$ is
	not greater than $|\Lambda|^{kc} - \sum_{c' < c, c' | c} A(k, c')$.
        
\end{proof}

The smallest $k$ which can be provided for any AN is equal to the
minimum feedback arc set of the network. As such this result is very similar
to a previous result of~\cite{J-Aracena2008,J-Aracena2017}, which states an upper bound 
on the total number of attractors depending on the size of a positive feedback arc set.
Though the global bound with a positive feedback arc set would be stronger, the
present result is different as it operates on parallel update schedule and provides
different bounds on different sizes of attractors, where the previous result
offered a bound on the total count of attractors under asynchronous update
schedule.

\subsection{One-to-one modules}

A module with only one input has the particularity of being recursively wired
in a linear amount of possible ways. That is, the only degree of freedom in the wiring
is the choice of the node which will serve as output. Let us consider a module with only 
one input, and let us consider $e \in S$ as the designated output node of the module. In 
this context we will denote $\circlearrowright M$ as the AN obtained by wiring the input 
of the module to its designated output. Furthermore, the output function $O_e$ will 
sometimes be denoted $O$, as the designated output function of the module. Such an acyclic module 
with only one input and a designated output is called a one-to-one module.

\begin{theorem}
\label{th-one-to-one}
Let $M$ be a one-to-one module. The one-to-one module $M_{min}$ with a minimum
number of nodes and which defines the same output function as $M$ is of size $d$,
for $d$ the delay of the output function of $M$. 
\end{theorem}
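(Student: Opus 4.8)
The plan is to prove the statement in two directions: first that a one-to-one module of size $d$ suffices to realize any output function of delay $d$, and second that no smaller module can do so. For the lower bound, I would argue as follows. Suppose $M_{min}$ is a one-to-one module realizing output function $O$ with delay $d$, and suppose for contradiction it has fewer than $d$ nodes. Since $M_{min}$ is acyclic, the proof of Property~\ref{property-outputfunctions} tells us exactly how the delay of each output function is built up: a source node (influenced only by the input) has delay $1$, and every other node has delay $1 + \max\{j_{s'} \mid s' \text{ influences } s\}$. Thus along any directed path in the interaction digraph the delay increases by exactly $1$ at each step, so the delay of $O_e$ is at most the number of nodes on the longest directed path ending at $e$, which is at most $|S|$. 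If $|S| < d$ this contradicts the fact that $O_e = O$ has delay $d$. Hence $M_{min}$ has at least $d$ nodes.

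For the matching upper bound, I would exhibit an explicit one-to-one module of size exactly $d$ realizing $O$. Take nodes $S = \{s_1, s_2, \ldots, s_d\}$ with the single input $\alpha$ wired (eventually) to the designated output, arranged as a directed path: the local function of $s_1$ reads only the input $\alpha$ and copies it, i.e. $f_{s_1}(x,i) = \alpha$; for $2 \leq j \leq d-1$, $f_{s_j}(x,i) = s_{j-1}$ simply shifts; and $f_{s_d}(x,i) = g(s_{d-1}, \alpha)$ for a suitable Boolean function $g$ chosen so that $O_{s_d}$ equals $O$. The point is that $s_{j}$ has output function $O_{s_j}(i_1,\ldots,i_j) = \alpha_1$ (the input $d-j$ steps in the... ) — more carefully, with this shift register the value available at $s_{d-1}$ after $d-1$ steps encodes $\alpha$ from $d-1$ steps ago, and $s_d$ can combine the entire history $\alpha_1,\ldots,\alpha_d$ because each $\alpha_j$ can be recovered. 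One must check that an arbitrary output function of delay $d$, which by definition is a function of $(i_1,\ldots,i_d)$, and since there is one input these are $d$ scalars $\alpha_1,\ldots,\alpha_d$, can indeed be reconstructed from what a depth-$d$ acyclic module can compute; a clean way is to let node $s_j$ store the pair consisting of the current input and the previous node's stored value, so after $d$ steps $s_d$ sees all of $\alpha_1,\ldots,\alpha_d$ and applies $O$ directly. Since $O$ has minimal delay $d$ by hypothesis, the module realizing it has output function of delay exactly $d$, so it is a valid witness of size $d$, and combined with the lower bound $M_{min}$ has size exactly $d$.

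The main obstacle I expect is the upper-bound construction over a non-Boolean or merely getting the bookkeeping of delays exactly right: one has to ensure that the constructed module genuinely has \emph{minimal} delay $d$ for its output function (not accidentally less), and that "storing a pair" is legitimate within the Boolean alphabet — which it is, at the cost of using a constant number of extra nodes per stage, so one must be slightly cleverer to hit size exactly $d$ rather than $O(d)$. The right fix is to observe that since the single input contributes one scalar per time step, a simple $d$-stage shift register $s_1 \to s_2 \to \cdots \to s_d$ makes node $s_j$ carry $\alpha$ delayed by $j-1$ steps, so collectively at a fixed update the nodes $s_1,\ldots,s_d$ hold $\alpha_k, \alpha_{k-1}, \ldots, \alpha_{k-d+1}$; the subtlety is that a single node $s_d$ only reads $s_{d-1}$ and $\alpha$, not all of them simultaneously. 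This is resolved by instead making each $f_{s_j}$ depend on $s_{j-1}$ \emph{and} $\alpha$ and choosing the functions so that the output function of $s_j$ is precisely the restriction of $O$ obtained by "having seen $j$ inputs so far" — i.e. define a nested family $O^{(1)}, \ldots, O^{(d)} = O$ with $O^{(j)}$ of delay $j$ and $O^{(j)}(i_1,\ldots,i_j)$ expressible as $h_j(O^{(j-1)}(i_1,\ldots,i_{j-1}), i_j)$ — and the existence of such a decomposition for an \emph{arbitrary} $O$ of delay $d$ is exactly the remaining technical point to nail down. I would address it by taking $s_j$ to simply record the tuple $(\alpha_1,\ldots,\alpha_j)$ (encodable when $\Lambda = \B$ by letting the alphabet be $\B$ and noting a single node's state is one symbol — so in fact one should state the theorem's construction over the abstract $\Lambda$ and allow each $s_j$'s state to range over $\Lambda^{j}\subseteq$ a fixed large alphabet, or restrict attention to the regime where this is unproblematic), which makes the construction transparent and keeps the node count at $d$.
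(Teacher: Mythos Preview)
Your lower-bound argument is fine and matches the paper's: the delay of $O_e$ is bounded by the length of the longest directed path ending at $e$, hence by $|S|$.

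The upper bound, however, has a genuine gap. You build a shift register $s_1\to s_2\to\cdots\to s_d$ and then insist that $f_{s_d}$ depend only on $s_{d-1}$ and $\alpha$. That restriction is the whole source of your difficulties: a two-argument function $g(s_{d-1},\alpha)$ cannot in general reproduce an arbitrary $O:\Lambda^d\to\Lambda$, and your proposed repairs do not work over a fixed alphabet. The ``nested family'' idea $O^{(j)}=h_j(O^{(j-1)},i_j)$ amounts to asking that $O$ be computable by a width-one branching program, which fails already for Boolean majority on three inputs (the pairs $\{00\},\{01,10\},\{11\}$ must be distinguished by a single bit before seeing $\alpha_3$, and they cannot be). The ``store the whole tuple in one node'' idea requires enlarging the alphabet, but $M_{min}$ must be a module over the same $\Lambda$ as $M$.

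The fix is much simpler than anything you attempted: there is no reason the output node should read only its predecessor on the line. Let $s_1,\dots,s_{d-1}$ be a pure shift register ($f_{s_1}=\alpha$, $f_{s_j}=s_{j-1}$), so that after $d-1$ updates the states $(s_1,\dots,s_{d-1})$ hold $(\alpha_{d-1},\dots,\alpha_1)$. Now let the designated output node $s_d$ read $\alpha$ \emph{and every} $s_j$, with
\[
  f_{s_d}(x,i)=O(s_{d-1},s_{d-2},\dots,s_1,\alpha).
\]
One step later $s_d$ evaluates to $O(\alpha_1,\dots,\alpha_d)$, so $O_{s_d}=O$ with delay exactly $d$. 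This is precisely the construction the paper gives (see the right-hand module in Figure~\ref{fig-one-to-one}, where the output node $c$ depends on both $a$ and $b$, not just $b$), and it uses exactly $d$ nodes over the original alphabet.
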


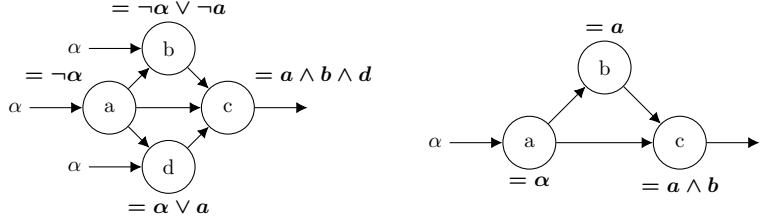
\begin{figure}[t!]
\vspace{-0.5cm}
\begin{center}
\begin{minipage}{.4\textwidth}
\begin{tikzpicture} [every node/.style={scale=\ech},node distance=0.4cm,-{Latex[length=1.5mm, width=1.5mm]}]
	\node[state] (A) {a};
	\node[state] (B) [above right=of A] {b};
	\node[state] (C) [below right=of B] {c};
	\node[state] (D) [below left =of C] {d};

	\path	(A) edge (B)
			(A) edge (C)
			(A) edge (D)
			(B) edge (C)
			(D) edge (C);

	\node [left =20pt of A] (alpha1) {$\alpha$};
	\draw (alpha1) -- (A);
	\node [left =20pt of B] (alpha2) {$\alpha$};
	\draw (alpha2) -- (B);
	\node [left =20pt of D] (alpha3) {$\alpha$};
	\draw (alpha3) -- (D);
	\node [right =20pt of C] (output) {};
	\draw (C) -- (output);

	\node [above left=0pt of A] (AF) {$\boldsymbol{=\neg \alpha}$};
	\node [above=0pt of B] (BF) {$\boldsymbol{=\neg \alpha \vee \neg a}$};
	\node [above right=0pt of C] (CF) {$\boldsymbol{=a \wedge b \wedge d}$};
	\node [below=0pt of D] (DF) {$\boldsymbol{=\alpha \vee a}$};
\end{tikzpicture}
\end{minipage}
\hspace{0.5cm}
\begin{minipage}{.4\textwidth}
\begin{tikzpicture} [every node/.style={scale=\ech},node distance=0.7cm,-{Latex[length=1.5mm, width=1.5mm]}]
	\node[state] (A) {a};
	\node[state] (B) [above right =of A] {b};
	\node[state] (C) [below right =of B] {c};

	\path	(A) edge (B)
			(A) edge (C)
			(B) edge (C);

	\node [left =20pt of A] (alpha) {$\alpha$};
	\draw (alpha) -- (A);
	\node [right =20pt of C] (output) {};
	\draw (C) -- (output);

	\node [below=0pt of A] (AF) {$\boldsymbol{=\alpha}$};
	\node [above=0pt of B] (BF) {$\boldsymbol{=a}$};
	\node [below =0pt of C] (CF) {$\boldsymbol{=a \wedge b}$};
\end{tikzpicture}
\end{minipage}
\end{center}
\caption{Illustration of Theorem~\ref{th-one-to-one}. Both modules consider
$c$ as their output node, and display the same output function
$O = \alpha_2 \wedge \alpha_3$. The module on the right is optimal, as $3$ is
the delay of its output function.}
\label{fig-one-to-one}
\end{figure}

\begin{proof}
	First we can prove that we cannot construct a module with a size smaller than 
        the delay of its output function. This is easily shown as there need to be
        a line of at least $d$ in size
        in the network's interaction digraph.

	To prove that such a minimal network always exists, simply construct it by using
	$d-1$ nodes as a line which offers the input's value delays from $2$ to $d$.
	The last node computes the desired output function and takes the values from
	the input directly for a delay of $1$, or from the rest of the network for a
	delay from $2$ to $d$.
        
\end{proof}

An example of the application of Theorem~\ref{th-one-to-one} is illustrated in 
Figure~\ref{fig-one-to-one}. This construction is polynomial in time, and bears strong 
resemblances with the objects known as Feedback Shift Registers~\cite{B-Elspas1959}. 
%A good introduction to Feedback Shift Registers can be found in~\cite{B-Elspas1959}.

\section{Complexity Results}
\label{section-complexity}

%In the following section, we will denote "the recursive wiring of a module" as
%an Automata Network which is obtained by recursively wiring every input of said
%module to itself.

This section presents complexity results that have been obtained around
output functions, and the difficulty of the analysis of the dynamics of
acyclic modules after being recursively wired into a complete network.
Remark that these questions have been widely addressed in the context of
threshold Boolean ANs~\cite{J-Alon1985,C-Orponen1989,C-Orponen1992}. Such
a wiring will sometimes be denoted as a complete recursive wiring of the module.
A module is encoded into the input of a decision problem as the list of its
local functions written in propositional logic. As such the computation
of the output functions of an acyclic module is comparable to the computation
of a circuit.
%We first examine the complexity of the computation of an output function in
%an acyclic module. This process is equivalent to the computation of a
%circuit.

%\begin{cproblem}
%  \problemtitle{\textsc{Acyclic Module Output Problem}}
%  \probleminput{An acyclic module $M$, a node $s \in S$ and an output function $O$.}
%  \problemquestion{Does $O_s = O$ in $M$?}
%\end{cproblem}

%\begin{theorem}
%The Acyclic Module Output Problem is P-complete.
%\end{theorem}
%\begin{sproof}
%We reduce from the Circuit Value Problem, as the computation of an output
%function in a Boolean module without any input results in a constant
%output function which is equivalent to the evaluation of a circuit.
%
%\end{sproof}

Let us provide a few decision problems on the
dynamics of a network obtained from a recursively wired acyclic module.

\begin{cproblem}
  \problemtitle{\textsc{Acyclic Module Attractor Problem}}
  \probleminput{An acyclic module $M$ with $k$ inputs and $n$ nodes, a 
	function $\omega$ which defines a complete recursive wiring over $M$,
	and a number $c$ encoded in unary.}
  \problemquestion{Does there exist an attractor of size $c$ in the dynamics
	of $\circlearrowright_\omega M$?}
\end{cproblem}

\begin{cproblem}
  \problemtitle{\textsc{One-to-one Module Attractor Problem}}
  \probleminput{A one-to-one module $M$ with $n$ nodes, a 
	function $\omega$ which defines a complete recursive wiring over $M$,
	and a number $c$ encoded in unary.}
  \problemquestion{Does there exist an attractor of size $c$ in the dynamics
	of $\circlearrowright_\omega M$?}
\end{cproblem}

\begin{cproblem}
  \problemtitle{\textsc{Acyclic Module Fixed Point Problem}}
  \probleminput{An acyclic module $M$ with $k$ inputs and $n$ nodes, and a 
	function $\omega$ which defines a complete recursive wiring over $M$.}
  \problemquestion{Does there exist a configuration $x$ such that
	$\circlearrowright_\omega M(x) = x$?}
\end{cproblem}

\begin{cproblem}
  \problemtitle{\textsc{One-to-one Module Fixed Point Problem}}
  \probleminput{A one-to-one module $M$ with $n$ nodes, and a 
	function $\omega$ which defines a complete recursive wiring over $M$.}
  \problemquestion{Does there exist a configuration $x$ such that
	$\circlearrowright_\omega M(x) = x$?}
\end{cproblem}

Those four problems are variations of the same question under different
sets of constraints. The first problem, the Acyclic Module Attractor Problem,
generalises the other three decision problems,
while the One-to-one Module Fixed Point Problem
is a specific case of the other three decision problems.
We provide our complexity analysis of those problems in a way that mirrors this
diamond-like structure.

\begin{theorem}
\label{th-fpt}
The Acyclic Module Attractor Problem can be solved in time\linebreak
$\mathcal{O}( f(k \times c) q(n) )$ for some function $f$ and $q$ a polynomial,
{\em i.e.} it is fixed parameter tractable.
\end{theorem}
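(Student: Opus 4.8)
The plan is to exploit the structure uncovered by Theorem~\ref{th-one-to-one} and Property~\ref{property-outputfunctions}: the limit dynamics of $\circlearrowright_\omega M$ is entirely governed by the evaluations of the $k$ nodes in the output set $\omega(I)$ over time, and each such evaluation is an output function whose delay may be large but whose \emph{value} depends only on a bounded-length window of input configurations. Concretely, following the argument of Theorem~\ref{th-bound} (Claim~\ref{boundproof-hypo}), an attractor $X$ of size $c$ is determined by its projection $X|_{\omega(I)}$, which is a sequence of $c$ vectors in $\Lambda^{\omega(I)}$, i.e.\ an object of size at most $|\Lambda|^{kc}$. So first I would set up the search space as the set of candidate periodic input sequences $J$ of length $c$ over the $k$ wired inputs; there are $|\Lambda|^{kc}$ of them, which is $f(k\times c)$ for $f(t)=|\Lambda|^{t}$.

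Next, for each such candidate cyclic sequence $J=(i_1,\dots,i_c)$ (read cyclically, so that iterating consumes $i_1,i_2,\dots,i_c,i_1,\dots$), I would check whether it is \emph{consistent}: whether, when the module consumes these inputs in a loop, the evaluations it produces on the output nodes $\omega(I)$ reproduce exactly the sequence $J$ that was fed in. By Claim~\ref{th-limit-fact1} and Claim~\ref{th-limit-fact2} of the proof of Theorem~\ref{th-limit-from-output}, the configuration reached after feeding a sufficiently long input sequence depends only on that sequence, and if the generated output sequence equals the input sequence then the resulting configuration lies on an attractor of size dividing $c$. So the verification is: extend $J$ cyclically to a window long enough to exceed the maximal delay $d_s$ of the output functions $O_s$ for $s\in\omega(I)$ (a padding of at most $n$ extra copies of the period suffices, since each $d_s\le n$ as in Theorem~\ref{th-one-to-one}), then simulate $M$ along this padded sequence and read off the values on $\omega(I)$ at the $c$ positions of one full period. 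This simulation is just iterating the update $M(\cdot,\cdot)$ along a sequence of length $O(n+c)$, and each update evaluates $n$ local functions on inputs of size $O(n+k)$; all of this is $q(n)\cdot\mathrm{poly}(c)$, hence polynomial in the input size (recall $c$ is given in unary). If the read-off period equals $J$, we declare $J$ consistent and it witnesses an attractor whose size divides $c$; to get size \emph{exactly} $c$ one additionally rejects the $J$ that are themselves periodic with a strictly smaller period — a check done in time polynomial in $c$ — or, equivalently, one can argue as in Theorem~\ref{th-bound} that the count of exact-size-$c$ attractors is handled by the inclusion–exclusion there while existence only needs the divisibility version plus a minimality filter.

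The algorithm answers \textbf{yes} iff some candidate $J$ passes the consistency (and exact-period) test, and its running time is $|\Lambda|^{kc}$ times a polynomial in $n$ and $c$ — that is, $\mathcal{O}(f(k\times c)\,q(n))$ with $f(t)=|\Lambda|^{t}$ and $q$ the polynomial bounding one simulation pass — so the problem is fixed parameter tractable with parameter $k\times c$. The main obstacle, and the point that needs care in the write-up, is the delay issue: an output function $O_s$ can have delay as large as $n$, so a candidate period of length $c$ does not by itself provide enough history to evaluate the module's outputs — one must justify that cyclically repeating the period and then simulating from an \emph{arbitrary} starting configuration for enough steps yields the correct periodic behavior, and that the answer is independent of that starting configuration. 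This is exactly what Claim~\ref{th-limit-fact1} gives once the consumed sequence is long enough, so the argument goes through; the remaining bookkeeping is to bound the required padding by $n$ and to confirm that filtering out sub-periodic candidates does not blow up the running time beyond $f(k\times c)\,q(n)$.
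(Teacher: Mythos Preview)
Your proposal is correct and follows essentially the same approach as the paper: enumerate the $|\Lambda|^{kc}$ candidate input sequences of period $c$, feed each one cyclically into $M$ long enough to cover all output-function delays, and accept iff the produced outputs on $\omega(I)$ reproduce the fed inputs. If anything you are slightly more careful than the paper's own write-up, since you make explicit both the $O(n)$ padding needed to exceed the maximal delay and the exact-period filter to distinguish attractors of size exactly $c$ from those of a proper divisor of $c$.
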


\begin{proof}
We construct an algorithm which iterates all of the possible input sequences
of size $c$.
We then execute the network on each sequence and check if the outputs correspond
to the given input. This process scales polynomially with the size of the network,
but exponentially with the size of the attractor and the number of inputs.

	Formally, this algorithm checks all of the
	$|\Lambda|^{k \times c}$ possible sequences of input configurations for
	$k$ inputs and of length $c$. To check if an input configuration $J$ describes
        an attractor of size $c$ in $\circlearrowright_\omega M$, simply update
        module $M$ with an input sequence composed as the repetition of the sequence
	$J$ until the obtained sequence is at least as long as the largest delay
	in an output function of $M$. An attractor in $\circlearrowright_\omega M$
	will be obtained if for every input $\alpha$, the sequence of values of
	the node $\omega(\alpha)$ over time is identical to the sequence of values
	of the input $\alpha$. This procedure only requires in the worst case the
        evaluation of the entire network $c$ times and $k$ checks
        at each step, which is polynomial in $n \times k \times c$.

        Similarly, every possible attractor of size $c$ in $\circlearrowright_\omega M$
	has a corresponding input sequence in $M$. To see that this is true, simply
	construct an input sequence $J$ which for every input $\alpha$ defines the
	$i$-th evaluation of input $\alpha$ as the evaluation of node $\omega(\alpha)$
	in the $i$-th configuration of the attractor.

	By checking every possible input sequence for $k$ inputs and of length $c$,
        we conclude on the existence of an attractor of size $c$ in $\circlearrowright_\omega M$.
	This algorithm is of complexity $O( |\Lambda|^{k \times c}
        r(n \times k \times c) )$, for $r$ a polynomial, which implies that there exists
	$f$ a function and $q$ a polynomial such that the complexity of this
	algorithm is  $O( f(k \times c) q(n) )$. 
\end{proof}

\begin{theorem}
\label{comp-cycle}
The One-to-one Module Attractor Problem is NP-complete.
\end{theorem}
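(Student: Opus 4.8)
The plan is to establish membership in NP first, then hardness by reduction from a known NP-complete problem. For membership, given a one-to-one module $M$ with $n$ nodes and a target size $c$ encoded in unary, a certificate is an attractor $X = (x_1, \ldots, x_c)$, or equivalently (by Theorem~\ref{th-one-to-one} and the structure exploited in Theorem~\ref{th-fpt}) just the sequence of $c$ values taken by the output node. Verification amounts to running $\circlearrowright_\omega M$ for $c$ steps from $x_1$ and checking that we return to $x_1$ and that no smaller period divides $c$; since $c$ is in unary this is polynomial in the input size. Alternatively one can present the input sequence $J$ of length $c$ feeding the single wired input and check, as in the proof of Theorem~\ref{th-fpt}, that the output node reproduces $J$ — this is the cleaner certificate given the one-to-one setting.

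For hardness, the natural source is the fixed point / attractor existence question for Boolean circuits, i.e. essentially \textsc{Circuit-SAT}: given a Boolean circuit $C$ on $k$ input bits, decide whether some assignment makes it output $1$. The key observation, already flagged in the paper, is that an acyclic module \emph{is} a circuit and its output functions are exactly circuit evaluations. The reduction idea is to take the circuit $C$, realise it as a one-to-one module whose single input $\alpha$ is fed, bit by bit over $k$ consecutive time steps, the $k$ bits that $C$ reads (this is exactly the Feedback-Shift-Register style construction underlying Theorem~\ref{th-one-to-one}: a line of delay nodes records $\alpha_1, \ldots, \alpha_k$), and whose output node computes a function designed so that the wired-back dynamics has an attractor of the chosen size $c$ precisely when $C$ is satisfiable. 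Concretely, I would arrange the output function so that the feedback loop ``locks in'' a satisfying assignment: for instance, make $c$ equal to $k$ (or a small multiple), have the output node emit, at each phase of the cycle, the appropriate bit of a guessed assignment together with a gadget that forces consistency — the loop closes into a length-$c$ attractor iff the guessed bits form a satisfying assignment of $C$. Care is needed so that unsatisfying guesses either fail to cycle at all or collapse to a shorter period, which is why one must also verify the ``no proper divisor'' condition; padding $c$ to be prime (or handling divisors via the $A(k,c)$ bookkeeping of Theorem~\ref{th-bound}) sidesteps this.

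The main obstacle is the rigidity of the one-to-one format: there is only a single input and it must be wired to a single designated output node, so all the ``parallelism'' of a $k$-input circuit has to be serialised through one wire and then recombined. Getting the timing right — ensuring the $k$ bits read by $C$ are all simultaneously available at the moment the output node evaluates $C$, and that the feedback does not corrupt them before the cycle closes — is the delicate part of the construction; the delay/line gadget from Theorem~\ref{th-one-to-one} is the right tool but its interaction with the recursive wiring $\circlearrowright$ needs to be checked step by step. A secondary subtlety is ruling out spurious short attractors introduced by the gadget itself; I would handle this by a minimality/periodicity argument analogous to the divisor subtraction in the proof of Theorem~\ref{th-bound}, choosing the size parameter to make the intended attractor the unique one of that size. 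Once the gadget is pinned down, correctness of the reduction is a routine unfolding of the definition of $\circlearrowright_\omega M$ together with Property~\ref{property-outputfunctions}.
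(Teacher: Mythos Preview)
Your plan matches the paper's approach almost exactly: reduce from \textsc{SAT}, serialise the $m$ variables through a shift register feeding the single input, make the target length $c$ a prime just above $m$ (the paper invokes Bertrand--Chebyshev and AKS to bound and find it), and have the output node both evaluate the formula and pass the last tape bit back into the ring so that a satisfying assignment rotates forever while an unsatisfiable instance collapses to the all-zero fixed point. The NP upper bound is also the same.

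Where your proposal has a genuine gap is the step you flag yourself as ``the delicate part'': you never say what the output node actually computes, and the na\"ive choice fails. If the output node simply outputs $x_{t_m}$ when $t_1,\dots,t_m$ currently encode a satisfying assignment and $0$ otherwise, then after one step the tape has shifted and nodes $t_1,\dots,t_m$ no longer hold the satisfying assignment (it has moved by one position around the ring), so the output node now sees garbage and emits $0$, destroying the cycle. The paper's fix is that the output node $q$ checks not just the current contents of $t_1,\dots,t_m$ but \emph{every cyclic shift} of the whole ring (trying both possible values for its own unknown state $x_q$), and passes $x_{t_{m+e}}$ through whenever any shift encodes a satisfying valuation. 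This makes the ``let the bit through'' condition rotation-invariant, so once a satisfying assignment is on the tape the ring rotates indefinitely; conversely if $f$ is unsatisfiable no shift ever succeeds, $q$ is constantly $0$, and everything converges to $0^{m+e+1}$. This all-shifts disjunction is the missing idea in your sketch; without it the ``loop closes into a length-$c$ attractor'' claim is false. The formula for $f_q$ is still polynomial-size (a disjunction of $m+e+1$ shifted copies of $f$, each with two choices for $x_q$), so the reduction remains polynomial.
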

\begin{proof}
In this proof we provide a reduction from the SAT problem which for any formula
with $m$ variables, constructs a module of size $3m + 1$. The first $3m$ nodes
encode the input and the last node checks the evaluation. If at any point the
formula is evaluated at false or if the encoding is wrong, the whole network
stabilises to a fixed point. If the encoding is correct and the evaluation
positive, the configuration will shift in the network, providing an attractor
of size $3m + 1$. The existence of this attractor is proven equivalent to the
satisfiability of the formula.

  First see that this problem is in NP as,
  providing any configuration, we can verify that it
  is part of a cycle of size $c$ by updating the network $c$ times
  (each update requires to evaluate $n$ local functions)
  and making at most $c$ comparisons per step,
  for an overall polynomial time in the input size.
  
  To prove that the problem is NP-hard, we present a reduction from the 
  SAT problem. Given a formula $f$ on $m$ variables $v_1,\dots,v_m$, we
  will construct a one-to-one module on $m+e+1$ nodes
  (for some $e$ upper bounded by a constant)
  such that, when the output is wired to the input, there exists
  a cycle of size $c=m+e+1$,
  if and only if there exists a valuation
  satisfying $f$.
  
  The one-to-one module, denoted $M$, is composed of two parts.
  The first part is a {\em shifting tape}, which
  is composed of $m+e$ nodes $t_1,\dots,t_{m+e}$
  with $e$ the smallest number such that $m+e+1$ is a prime number
  (the value of $e$ is at most $2m$
  according to the Bertrand--Chebyshev theorem \cite{J-Chebychev1852},
  and one can find it in polynomial time
  thanks to the well-known algorithm from \cite{J-AKS2004}).
  For $1 < k \leq m+e$ we define the local functions $f_{t_k}(x) = t_{k - 1}$,
  and $f_{t_1}(x) = \alpha$ with $\alpha$ the only input of the network.
  For $i \in \{1,\dots,m\}$ the state of node $t_i$
  encodes the evaluation of variable $v_i$.
  
  The second part of the network is composed of a unique node denoted $q$,
  the output node to be wired to input $\alpha$, which
  has the role of either letting the shifting tape of size $m+e$ become a
  shifting tape of size $m+e+1$,
  or stopping the process and make the configuration converge
  to $0^{m+1}$. Its local function is:
  $$
    f_q(x)=\begin{array}\{{ll}.
      x_{t_{m+e}} & \text{ if nodes $t_1,\dots,t_m$ of the shifting tape encode}\\
      & \hspace*{5cm}\text{ a valuation satisfying $f$}\\
      & \text{ or a shift may encode a valuation satisfying $f$,}\\
      0 & \text{ otherwise.}
    \end{array}
  $$
  Since module $M$ is acyclic node $q$ cannot know its own state, but it knows
  the state of all other nodes. Therefore the second condition of the disjunction
  is checked as follows: node $q$ tries, for $x_q=0$ and for $x_q=1$, and for
  any $k$ from $1$ to $m+e$, whether cyclically shifting the configuration
  (considering that $q$ follows $t_{m+e}$ and preceeds $t_1$) by $k$ units
  can give a shifting tape encoding a valuation satisfying $f$ on the states of nodes
  $t_1,\dots,t_m$;
  if any combination of state for $x_q$ and shift $k$
  gives a shifting tape encoding a valuation satisfying $f$ then the condition
  ``a shift may encode a valuation satisfying $f$'' is true.

  This construction is illustrated in Figure~\ref{fig-proof}.
  It has polynomial size, as the local functions of the $c=m+e+1$ nodes
  can be expressed with propositional formulas of size polynomial in $f$ and $m+e$
  (naively for $f_q$ with a disjunction of $m+e+1$ terms,
  each containing a copy of $f$).
  
  \tikzset{small state/.style = {state, minimum size = 0.8cm},
}

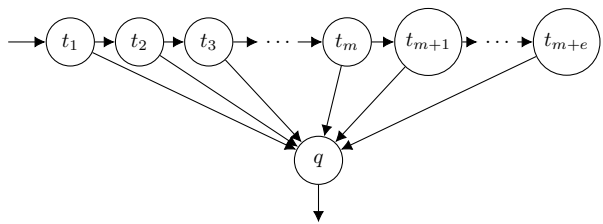
\begin{figure}[t!]
\begin{center}
\begin{tikzpicture} [every node/.style={scale=\ech},-{Latex[length=1.5mm, width=1.5mm]}, node distance =1.15cm]

	\node[small state] (t1) {$t_1$};
	\node[small state] (t2) [right of=t1] {$t_2$};
	\node[small state] (t3) [right of=t2] {$t_3$};
	\node (dots) [right of=t3] {$\ldots$};
	\node[small state] (tm) [right of=dots] {$t_m$};
        \node[small state] (tm1) [right =0.3cm of tm] {$t_{m+1}$};
	\node (dots2) [right of=tm1] {$\ldots$};
        \node[small state] (tme) [right of=dots2] {$t_{m+e}$};

        \path (t1) edge (t2)(t2) edge (t3)(t3) edge (dots)(dots) edge (tm)(tm) edge (tm1)(tm1) edge (dots2)(dots2) edge (tme);

	\node (center) at ($(t1)!0.5!(tme)$) {};
	\node[small state] (q) [below = of center] {$q$};

        \path (t1) edge (q);
        \path (t2) edge (q);
        \path (t3) edge (q);
        \path (tm) edge (q);
        \path (tm1) edge (q);
        \path (tme) edge (q);

	\node[left =0.5cm of t1] (input) {};
	\node[below =0.5cm of q] (output) {};

	\path (input) edge (t1) (q) edge (output);
\end{tikzpicture}
\end{center}
\caption{Module $M$ in the proof of Theorem~\ref{comp-cycle}. 
If $f$ has a satisfying valuation then node $q$ can let the shifting tape of size $m+e$
become a rotating tape of size $m+e+1$, otherwise $f_q$ evaluates to $0$ and any
configuration converges to the fixed point $0^{m+e+1}$.}
\label{fig-proof}
\end{figure}
  
  If $f$ has a satisfying valuation, then some configuration $x$ encoding this
  valuation on nodes $t_1,\dots,t_m$ of the shifting tape belongs to a cycle
  of size $c$. Indeed, in this case $x$ is cyclically shifted by one unit at each step
  along the $c=m+e+1$ nodes of $\circlearrowright M$,
  and by taking $x_q \neq x_1$ configuration $x$ cannot be a fixed point therefore
  $m+e+1$ prime implies that $F^{c'}(x) \neq x$ for all $c'<c$.
  
  If $f$ has no statisyfing valuation, then $f_q(x)=0$ for any $x$ and
  $\circlearrowright M$ has only one attractor which is a fixed point, $0^{m+e+1}$.
  
\end{proof}

\begin{theorem}
\label{comp-fix}
The Acyclic Module Fixed Point Problem is NP-complete.
\end{theorem}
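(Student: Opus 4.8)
The plan is to establish membership in NP directly, and then to reduce SAT to the problem. For membership, given a candidate configuration $x$ over the $n$ nodes of $\circlearrowright_\omega M$, one evaluates each of the $n$ local functions (propositional formulas supplied in the input) on $x$ and checks whether every coordinate is fixed; this takes time polynomial in the input size, so the problem lies in NP.

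For NP-hardness I would reduce from SAT. Given a propositional formula $f$ on variables $v_1,\dots,v_m$, build an acyclic module $M$ with node set $S=\{s_1,\dots,s_m,p\}$ and input set $I=\{\alpha_1,\dots,\alpha_m,\beta\}$, with local functions $f_{s_i}(x,i)=\alpha_i$ for each $i$ — so each $s_i$ acts as a free ``memory cell'' holding the truth value of $v_i$ — and $f_p(x,i)=\neg\bigl(f(s_1,\dots,s_m)\vee\beta\bigr)$, where $f(s_1,\dots,s_m)$ denotes $f$ with each $v_i$ replaced by the state of node $s_i$. Since no node depends on $p$ and each $s_i$ depends only on an input, $M$ is acyclic, and all its local functions are propositional formulas of size polynomial in $|f|$ and $m$. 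Let $\omega$ be the complete recursive wiring with $\dom(\omega)=I$, $\omega(\alpha_i)=s_i$ and $\omega(\beta)=p$.

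Next I would analyse the fixed points of $F=\circlearrowright_\omega M$, whose update is $F(x)_{s_i}=x_{s_i}$ and $F(x)_p=\neg\bigl(f(x_{s_1},\dots,x_{s_m})\vee x_p\bigr)$. The equations $x_{s_i}=x_{s_i}$ hold vacuously, so $x$ is a fixed point iff $x_p=\neg(f(x_{s_1},\dots,x_{s_m})\vee x_p)$; the choice $x_p=1$ forces $1=0$, a contradiction, while $x_p=0$ forces $f(x_{s_1},\dots,x_{s_m})=1$. Hence $F$ has a fixed point if and only if some assignment of $s_1,\dots,s_m$ satisfies $f$, i.e. if and only if $f$ is satisfiable; and the construction is clearly computable in polynomial time, which completes the reduction.

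There is no serious obstacle here; the only point that requires care is arranging the feedback created by the recursive wiring — specifically the self-dependence of $p$ obtained by routing $\beta$ back to $p$ — so that an unsatisfiable $f$ makes the unique remaining candidate value $x_p$ self-contradictory, thereby avoiding the ``all-zero'' trap in which a naive gadget would still admit a spurious fixed point, all while keeping the module $M$ itself acyclic. The negation in $f_p$ combined with the wiring $\omega(\beta)=p$ is exactly what achieves this.
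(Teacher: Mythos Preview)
Your proof is correct and follows essentially the same approach as the paper: variable nodes are made into self-loops via the recursive wiring, and a single extra node is designed to oscillate (via a $\neg$ on its own fed-back state) unless the formula evaluates to $1$. The only cosmetic difference is that the paper splits your node $p$ into two nodes---a \texttt{solver} computing $f$ and an \texttt{oscillator} with local function $\neg\texttt{solver}\wedge\neg\texttt{oscillator}$---whereas you fold both roles into $p$ by observing that $\neg(f\vee\beta)=\neg f\wedge\neg\beta$; the analysis is otherwise identical.
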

\begin{proof}
This proof provides a reduction from the SAT problem. In this reduction,
the obtained module will stabilise only if a given node, which computes
a SAT formula, has constant value 1.

	First see that this problem is in NP since, given any configuration,
	verifying that it is a fixed point can be done by updating the
	whole network once, which is done in polynomial time in the size of its
	encoding.

	To see this problem as NP-hard we present a reduction from the SAT problem.
	given a formula $f$, we construct a module with one node for each
	variable in $f$. Each of these nodes are wired to themselves by the
	wiring $\omega$, forming identity local functions of the form $f_a(x) = a$.
	Then we add two other nodes to the module.
        One, named \texttt{solver}, computes $f$ from
	the states of nodes corresponding to variables.
        The second, named \texttt{oscillator},
        has local function $f_{\texttt{oscillator}}(x) =
	\neg \texttt{solver} \wedge \neg \texttt{oscillator}$.
        This is constructed
        via an input which is wired onto \texttt{oscillator} by $\omega$.

        Every node except \texttt{solver} and \texttt{oscillator} have a fixed state,
        therefore the existence of a fixed point only depends on the evaluation
        of \texttt{solver} and \texttt{oscillator}. The \texttt{solver} node has a
        fixed state after one iteration, corresponding to
	the evaluation of formula $f$ according to the states of variables nodes.
	Consequently the existence of a fixed point
        only depends on the behavior of the \texttt{oscillator} node,
        which by definition will
        oscillate as long as the evaluation of the \texttt{solver} node is $0$.
        We conclude that the existence of a fixed point in the AN obtained by wiring this
	module according to $\omega$ is equivalent to the existence of
        a positive evaluation
	of the formula $f$. This construction being polynomial in the size of the
	formula, we conclude that the problem is NP-hard.
        
\end{proof}

\begin{corollary}
The One-to-one Module Fixed Point Problem is in P.
\end{corollary}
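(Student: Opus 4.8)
The plan is to reduce the search for a fixed point of $F:=\circlearrowright_\omega M$ to checking, for each value $b\in\Lambda$, a single explicitly computable candidate configuration. Write $\alpha$ for the unique input of $M$ and $e$ for its designated output, so that $\omega(\alpha)=e$ and $F$ reads the input as the state of node $e$: $F(x)_s=f_s(x,x_e)$ for every $s\in S$. The first observation is that at a fixed point the input is ``frozen'': if $F(x)=x$ and we set $b=x_e$, then $x_s=f_s(x,b)=M(x,(b))_s$ for all $s$, i.e.\ $x$ is a fixed point of the map $x'\mapsto M(x',(b))$, which is the update map of the acyclic automata network $M_b$ obtained from $M$ by hard-wiring its input to the constant $b$.

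Next I would invoke the classical fact on acyclic automata networks recalled in Section~\ref{section-acyclic}: $M_b$ has a unique fixed point, reached under the parallel update schedule from an arbitrary initial configuration within a number of steps bounded by the length of the longest path in the acyclic interaction digraph of $M$, hence within $n$ steps. (This $n$ also upper-bounds the delay of every output function of $M$, since delays are exactly lengths of paths in that digraph; one may optionally phrase the value $(y^{(b)})_e$ below through the output function $O$ of $M$ via Property~\ref{property-outputfunctions}.) Consequently the unique fixed point of $M_b$ equals $y^{(b)}:=M(x_0,(\underbrace{b,\ldots,b}_{n}))$ for an arbitrary fixed $x_0$, and is computable in polynomial time. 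Combining the two paragraphs: $x$ is a fixed point of $F$ if and only if $x=y^{(b)}$ for $b=x_e$, and the self-consistency constraint $x_e=b$ then reads simply $(y^{(b)})_e=b$. The converse direction is immediate: if $(y^{(b)})_e=b$ for some $b$, then $F(y^{(b)})_s=f_s(y^{(b)},(y^{(b)})_e)=M(y^{(b)},(b))_s=(y^{(b)})_s$, so $y^{(b)}$ is a genuine fixed point of $F$.

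This yields the algorithm: for each $b\in\Lambda$, compute $y^{(b)}$ by simulating $n$ parallel updates of $M$ under the constant input $b$, and accept iff $(y^{(b)})_e=b$ for at least one $b$. There are $|\Lambda|$ candidates, and each is obtained with $n$ updates of $n$ local functions, every local function being evaluated on a propositional formula of size at most the input size; the whole procedure therefore runs in time polynomial in the size of the instance, establishing membership in $\mathrm{P}$.

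I expect the only delicate point to be the two-sided justification that a fixed point of the wired network corresponds exactly to the unique fixed point of some $M_b$ with the matching value on $e$, together with the convergence bound for $M_b$; the rest is bookkeeping. The one thing to state carefully is that $n$ simultaneously bounds the stabilisation time of every $M_b$ and the delays of all output functions, which holds because both quantities are controlled by the longest directed path in the acyclic interaction digraph of $M$.
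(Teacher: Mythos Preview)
Your argument is correct. The paper's own proof is a one-line invocation of Theorem~\ref{th-fpt}: with $k=1$ input and attractor size $c=1$, the parameter $k\times c$ is the constant~$1$, so the FPT algorithm of Theorem~\ref{th-fpt} runs in time $\mathcal{O}(f(1)\,q(n))$, which is polynomial.

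Your route is different in presentation though not in substance: rather than appealing to the general FPT result, you unfold what that algorithm does in the special case $k=c=1$. Enumerating all $|\Lambda|^{k\times c}=|\Lambda|$ input sequences of length~$1$ is exactly your loop over the constant values $b\in\Lambda$, and the consistency check in the proof of Theorem~\ref{th-fpt} (that the output trajectory on $\omega(\alpha)$ matches the input sequence) is exactly your condition $(y^{(b)})_e=b$. What you add is an explicit appeal to Robert's theorem on acyclic networks to justify that $n$ parallel updates of $M_b$ suffice to reach its unique fixed point, which makes your proof self-contained and independent of Theorem~\ref{th-fpt}. The paper's proof is terser; yours is more elementary and would stand even if Theorem~\ref{th-fpt} were not available.

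One cosmetic remark: you write ``$e$ for its designated output, so that $\omega(\alpha)=e$''. In the problem statement $\omega$ is part of the input and need not send $\alpha$ to the designated output of the one-to-one module. Your argument does not actually use anything about the designated output, so simply set $e:=\omega(\alpha)$ and everything goes through verbatim.
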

\begin{proof}
	This is an application of Theorem~\ref{th-fpt}.
        
\end{proof}

The above stated results imply that the size of the network
is not a meaningful parameter in the difficulty of the task of finding
attractors. Thereom~\ref{th-fpt} shows that the two parameters which
apply this effect are the size of the desired attractor and the number of inputs
the network bears when seen as an acyclic module. In other terms this second
parameter is the level of interconnectivity of the network. Theorems~\ref{comp-cycle}
and~\ref{comp-fix} prove that this caracterisation is tight.
Together, these four theorems provide a new perspective on a known fact; that
cycles in ANs are crucial for complexity to arise.

\begin{cproblem}
  \problemtitle{\textsc{Acyclic Module Output Construction Problem}}
  \probleminput{A set $\{M_1, M_2, \ldots, M_\ell\}$ of acyclic modules, and
	$O$ an output function encoded in a lookup table.}
  \problemquestion{Does there exist a set of non-recursive wirings $\omega$ which can
	construct an acyclic module from $M_1, M_2, \ldots, M_\ell$ such
	that $O$ is an output function of the obtained module?}
\end{cproblem}

\begin{theorem}
The Acyclic Module Output Construction Problem is NP-complete.
\end{theorem}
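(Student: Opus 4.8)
The plan is to prove membership in NP by a guess-and-check argument and NP-hardness by a reduction from the SAT problem. For membership, a certificate is the sequence of non-recursive wirings that assembles a sub-collection of $M_1,\dots,M_\ell$; each wiring is a partial map from the inputs of one module to the nodes of another, and polynomially many of them suffice, so the certificate has polynomial size. The verifier rebuilds the resulting module $M^*$, checks that it is acyclic (iterating non-recursive wirings over acyclic modules only ever creates arcs from one part towards another, so no cycle can appear), and checks that the given lookup table coincides with the output function of some node of $M^*$. This last check is polynomial, since in an acyclic module the delay of an output function is bounded by the number of nodes, hence each such function is obtained by iterating $M^*$ a bounded number of times, and the table of $O$ is itself part of the input.

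For hardness we reduce from the satisfiability of CNF formulas. Given $f$ on variables $v_1,\dots,v_m$ (tautologies being detected syntactically and mapped to a fixed positive instance, so we may assume $f$ is not a tautology), we build $2m+1$ acyclic modules. The \emph{checker} $M_0$ has a single node $q$, inputs $\alpha,\beta^1,\dots,\beta^m$, and local function $f_q(x,i)=\alpha\wedge\tilde f$, where $\tilde f$ is $f$ with each $v_j$ replaced by $\beta^j$; it is acyclic because its interaction digraph has no arc between nodes. For each $j$ there are two single-node, input-free modules $T_j$ and $F_j$, with constant local functions $1$ and $0$. The target $O$ has the single input $\alpha$, delay $2$, and is the four-row table $O(\alpha_1,\alpha_2)=\alpha_2$. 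The instance is polynomial in $|f|+m$.

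If $f$ is satisfiable, fix a satisfying assignment $\vec b$ and wire each $\beta^j$ to the node of $T_j$ when $b_j=1$ and to the node of $F_j$ otherwise. In the resulting module the only remaining input is $\alpha$ and the local function of $q$ becomes $\alpha\wedge f(\vec b)=\alpha$; since $q$ still has the constant-valued literal nodes as genuine predecessors, its value depends on the initial configuration at the first step but equals $\alpha$ at every later step, so $O_q(\alpha_1,\alpha_2)=\alpha_2=O$. Conversely, suppose non-recursive wirings of copies of these modules produce an acyclic $M^*$ with $O=O_s$. The only available nodes are copies of $q$ (input-dependent) and of the literal nodes (constant-valued); since $O$ has minimal delay $2$, node $s$ must have a predecessor (so it is not a literal-node copy) and all its predecessors must be sources, forcing $s$ to be a copy of $q$; since $O$ has a single input, all the $\beta$-inputs of that copy are wired away; and since wiring such an input to a non-constant node would either add a fresh input to $M^*$ or give $s$ a predecessor that is not a source, one obtains that each $\beta$-input ends up carrying a constant value $c_j$. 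Then $f_s=\alpha\wedge f(c_1,\dots,c_m)$, and $O_s$ equals $\alpha_2$ with delay $2$ — rather than collapsing to the constant $0$ with delay $1$ — only if $f(\vec c)=1$, so $f$ is satisfiable.

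The delicate point, and the main obstacle, is controlling the delay of the assembled output function. Wiring $\beta^j$ to a constant-valued node does not turn $q$ into a source: that node's initial state is still visible to $q$ for one step, so $q$'s output function has minimal delay $2$, not $1$ — which is why the target must be $\alpha_2$ with delay $2$ rather than $\alpha_1$ with delay $1$ (the latter choice would make every satisfiable instance wrongly negative). With this settled, the remaining work is to argue that no roundabout assembly — a copy of $q$ with only part of its inputs wired, or a chain in which one checker copy feeds a constant output into the $\beta$-inputs of another — can produce a node of the right arity with minimal delay exactly $2$ and output exactly $\alpha_2$ without ultimately committing some checker copy to a constant assignment satisfying $f$. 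Making that case analysis airtight, by tracking arities and delays through iterated non-recursive wirings, is where the real effort lies.
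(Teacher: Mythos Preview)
Your reduction is essentially the paper's: constant modules together with a single-node checker computing $f\wedge\alpha$, and a target output function that is a form of the identity. The paper uses only two constant modules $M_0,M_1$ and takes the target to be the identity with delay~$1$; you use $2m$ constant modules and take the target to be $\alpha_2$ with delay~$2$.

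Your delay-$2$ choice is the more careful one. Under the paper's own definition of non-recursive wiring, plugging an input into a constant node replaces that input by the node's \emph{state variable}, not by the constant value; after a single update the checker therefore still reads the arbitrary initial state of the constant node, and its output function has minimal delay~$2$, not~$1$. Your observation that this is ``the delicate point'' is exactly right, and your fix (make the target $\alpha_2$) is the natural one. Your tautology caveat is also harmless: you reduce from CNF-SAT, and a CNF is a tautology iff every clause contains a complementary pair of literals, which is checked in polynomial time; alternatively, replace $f$ by $f\wedge v_{m+1}$, which is equisatisfiable and never a tautology.

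Where you overcomplicate matters is the backward direction. You repeatedly argue about ``copies'' of the checker and about chains of checkers feeding one another, but the problem hands you a fixed finite list of modules, each used at most once. In your instance there is therefore exactly one node~$q$; every other node is a constant source with no inputs, and non-recursive wiring can only send inputs of the checker to those sources. Hence the candidate node $s$ with $O_s=O$ is forced to be $q$, its only possible predecessors are constant sources (so delay is at most~$2$), the single-input arity of $O$ forces all $\beta^j$ to be wired and $\alpha$ to remain free, and $O_q=\alpha_2$ then yields $f(\vec c)=1$. The ``airtight case analysis'' you worry about is not needed once the copies are removed.
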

\begin{proof}
We provide a reduction from the SAT problem. We ask for the construction of
an output function via the wiring of two modules with a unique constant function
\lq$0$\rq\ and \lq$1$\rq\ respectively,
and a bigger module which executes a computation from
its inputs based on the formula, such that the target output function is obtained
by non-recursive wirings if and only if the formula is satisfiable.

	This decision problem is in NP since, given the non-recursive wiring and
	the node which carries the target output function, the verification
    can be done in polynomial time. Note that the target output function is
    provided as a lookup table, and that checking the egality of two functions
    given as lookup tables can be done in a single pass, which is polynomial in
    time.

	To prove that this problem is NP-hard, take a SAT formula $f$, and construct
	the following instance of the present decision problem: the set of modules
	is $\{M_0, M_1, M_f\}$. Modules $M_0$ and $M_1$ have no input and only one node
	whose function is the constant $0$ and $1$ respectively. The
        module $M_f$ has as many inputs as the formula $f$ as variables,
        plus one denoted $\alpha$,
        and only one node which computes $f \wedge \alpha$
        using inputs corresponding to variables to compute $f$.
        The target output function $O$ is the identity (on one input) with delay $1$.

	For this instance to be positive, there has to be some wirings
        reducing the function $f \wedge \alpha$ to the identity
        (modules $M_0$ and $M_1$ have no input hence cannot produce $O$), meaning that
        the formula is satisfiable: either $\alpha$ is not wired and $f$ reduces to $1$;
        or $\alpha$ is wired (to $1$) and $f$ reduces to the identity
        on one variable, hence evaluating this last variable to $1$
        satisfies $f$.

        Conversely, if $f$ is satisfiable then wiring inputs corresponding to
        variables according to a satisfiable assignment reduces the local function 
        of module $M_f$ to $\alpha$, {\em i.e.} this node has the target output
        function $O$.
        
\end{proof}

\section{Conclusion}

%\sylvain{Il y a de la place pour dire plus peut-être...}

%Modules were initially introduced in the hope of providing a powerful formalism
%for composing and decomposing ANs. We believe that the present paper showcases the 
%utility of the formalism as a general tool to understand and predict the behavior of ANs.
%Theorem~\ref{th-limit-from-output} allow the classification of acyclic modules by
%the output function they define, while 

%Future approaches for the development of this formalism include its
%application to different update schedules.

Automata Networks are complex systems, the exhaustive study of which requires
an amount of resources exponential in the size of the network. By defining
and studying acyclic modules we propose an innovative way of approaching this
question. Theorem~\ref{th-limit-from-output} proposes
the reduction of the limit dynamic of a network to the output functions of an
acyclic module which composes it. We think that this result, alongside with
Theorem~\ref{th-one-to-one} which is a direct application of it, provides an
interesting way of categorising networks depending on their output functions.
Also presented are Theorem~\ref{th-bound} which proposes a bound on the total
number of attractors depending on the number of inputs in an acyclic module,
and the results listed in Section~\ref{section-complexity}, which state a range
of complexity results on acyclic modules.
The set of results proposed in this paper describe, in our opinion, a good
picture of the limits and possibilities that come from studying acyclic modules.

In future works, we plan to expand this formalism to more general update
schedules, and to propose a version of Theorem~\ref{th-one-to-one} which would
generalise to modules with more than one input and one output. We also plan
to apply those tools to optimise large automata networks, such as those designed
and studied in biology applications.	

\bigskip
\noindent
\textbf{Acknowledgments.}
The works of K{\'e}vin Perrot and Sylvain Sen{\'e} were funded mainly by their salaries as French State agents, % and therefore by French taxpayers' taxes,
affiliated to
Aix-Marseille Univ., Univ. de Toulon, CNRS, LIS, UMR 7020, Marseille, France (both)
and to
Univ. C{\^o}te d'Azur, CNRS, I3S, UMR 7271, Sophia Antipolis, France (KP),
and secondarily by 
ANR-18-CE40-0002 FANs project,
ECOS-Sud C16E01 project,
STIC AmSud CoDANet 19-STIC-03 (Campus France 43478PD) project.

\bibliographystyle{plain}
{\small{\bibliography{acyclic}}}

%\appendix
%\include{demonstration}

\end{document}